\def\F{\mathrn{F}}
\def\X{\CMcal{X}}
\def\Y{\CMcal{Y}}
\def\Z{\CMcal{Z}}
\def\U{\CMcal{U}}
\def\P{\CMcal{P}}
\def\Q{\CMcal{Q}}
\def\O{\CMcal{O}}
\def\H{\CMcal{H}}
\theoremstyle{plain}
\newtheorem{theorem}{Theorem}[section]
\newtheorem{lemma}[theorem]{Lemma}
\theoremstyle{definition}
\newtheorem{definition}[theorem]{Definition}
\newtheorem{claim}[theorem]{Claim}
\newtheorem{remark}[theorem]{Remark}
\newtheorem{fact}[theorem]{Fact}
\newcommand {\br} [1] {\ensuremath{ \left( #1 \right) }}
\newcommand {\Br} [1] {\ensuremath{ \left[ #1 \right] }}
\newcommand {\minusspace} {\: \! \!}
\newcommand {\smallspace} {\: \!}
\newcommand {\fn} [2] {\ensuremath{ #1 \minusspace \br{ #2 } }}
\newcommand {\Fn} [2] {\ensuremath{ #1 \minusspace \Br{ #2 } }}
\newcommand {\defeq} {\ensuremath{ \stackrel{\mathrm{def}}{=} }}
\newcommand {\mutinf} [2] {\fn{\mathrm{I}}{#1 \smallspace : \smallspace #2}}
\newcommand {\condmutinf} [3] {\mutinf{#1}{#2 \smallspace \middle\vert \smallspace #3}}
\newcommand {\prob} [1] {\Fn{\Pr}{#1}}
\newcommand {\abs} [1] {\ensuremath{ \left| #1 \right| }}
\newcommand {\norm} [1] {\ensuremath{ \left\| #1 \right\| }}
\newcommand {\normsub} [2] {\ensuremath{ \norm{#1}_{#2} }}
\newcommand {\onenorm} [1] {\normsub{#1}{1}}
\newcommand {\relent} [2] {\fn{\mathrm{S}}{#1 \middle\| #2}}
\newcommand {\rminent} [2] {\fn{\mathrm{S}_{\infty}}{#1 \middle\| #2}}
\DeclareMathOperator*{\bigE}{\mathbb{E}}
\newcommand {\expec} [2] {\Fn{\bigE_{\substack{#1}}}{#2}}
\newcommand {\email} [1] {\href{mailto:#1}{\texttt{#1}}}
\newcommand {\finset} [1] {\ensuremath{\CMcal{#1}}}
\newcommand {\bra} [1] {\ensuremath{ \left\langle #1 \right| }}
\newcommand {\ket} [1] {\ensuremath{ \left| #1 \right\rangle }}
\newcommand {\ketbratwo} [2] {\ensuremath{ \left| #1 \middle\rangle \middle\langle #2 \right| }}
\newcommand {\ketbra} [1] {\ketbratwo{#1}{#1}}
\newcommand {\cspace} [1] {\ensuremath{\mathnormal{#1}}}
\newcommand {\Tr} {\ensuremath{ \mathrm{Tr} }}
\newcommand {\partrace} [2] {\fn{\Tr_{#1}}{#2}}
\newcommand {\myqedhere} {\tag*{\qedhere}}
\newcommand {\Prot}{\mathcal{P}}
\newcommand {\suppress}[1]{}
\newcommand {\set} [1] {\ensuremath{ \left\lbrace #1 \right\rbrace }}
\def\ve{{\varepsilon}}
\def\F{\mathrm{F}}
\newcommand {\reg} [1] {\ensuremath{ \mathnormal{#1} }}
\def\E{\mathcal{E}}
\newcommand{\qcc}[2]{\mathrm{Q}^{\text{ent},\text{A}\rightarrow\text{B}}_{#1}\br{#2}}
\newcommand{\dqcc}[3]{\mathrm{Q}^{\text{ent},\text{A}\rightarrow\text{B} #1}_{#2}\br{#3}}
\newcommand{\conjugate}[1]{\overline{#1}}
\newcommand{\braket}[2]{\langle#1|#2\rangle}
\newcommand {\mytitle} {New one shot quantum protocols with application to communication complexity}
\newcommand {\Rahul}   {Rahul Jain}
\newcommand {\Anurag}  {Anurag Anshu}
\newcommand {\Penghui} {Penghui Yao}
\newcommand {\Priyanka} {Priyanka Mukhopadhyay}
\newcommand {\Ala}{Ala Shayeghi}
\newcommand {\CQT} {Centre for Quantum Technologies}
\newcommand {\CQTCS} {\CQT{} and Department of Computer Science}
\newcommand {\NUS} {National University of Singapore.}
\newcommand {\IQC} {Institute for Quantum Computing, Unversity of Waterloo}
\newcommand {\Maju} {MajuLab, CNRS-UNS-NUS-NTU International Joint Research Unit, UMI 3654, Singapore.}
\newcommand {\authorblock} [3] {
	\begin{minipage}[t]{0.3\linewidth}
		\centering
		{#1}\\[0.8ex]
		{\footnotesize {#2}\\[-0.7ex]
		\email{#3}}
	\end{minipage}\vspace{1ex}
}
\begin{document}

\begin{titlepage}
\title{\textbf{\mytitle}\\[2ex]}

\author{
    \authorblock{\Anurag}{\CQT, \NUS}{a0109169@u.nus.edu}
	\authorblock{\Rahul}{\CQTCS, \NUS  \\ \Maju}{rahul@comp.nus.edu.sg}
	\authorblock{\Priyanka}{\CQT, \NUS}{a0109168@nus.edu.sg}\\
    \authorblock{\Ala}{\IQC}{ashayeghi@uwaterloo.ca}
	\authorblock{\Penghui}{\CQT, \NUS}{phyao1985@gmail.com}
}

\clearpage\maketitle
\thispagestyle{empty}

\begin{abstract}
In this paper we present the following quantum compression protocol:

\bigskip
\noindent $\Prot$: Let $\rho,\sigma$ be quantum states such that $\relent{\rho}{\sigma} \defeq \Tr (\rho \log \rho - \rho \log \sigma)$, the relative entropy between $\rho$ and $\sigma$, is finite. Alice gets to know the eigen-decomposition of $\rho$. Bob gets to know the eigen-decomposition of $\sigma$. Both Alice and Bob know $ \relent{\rho}{\sigma}$ and an error parameter $\ve$. Alice and Bob use shared entanglement and after communication of $\O((\relent{\rho}{\sigma}+1)/\ve^4)$ bits from Alice to Bob, Bob ends up with a quantum state $\tilde{\rho}$ such that $\F(\rho, \tilde{\rho}) \geq 1 - 5\ve$, where $\F(\cdot)$ represents fidelity.

\bigskip 

This result can be considered as a non-commutative generalization of a result due to Braverman and Rao [2011] where they considered the special case when $\rho$ and $\sigma$ are classical probability distributions (or commute with each other) and use shared randomness instead of shared entanglement. We use $\Prot$ to obtain an alternate proof of a direct-sum result for entanglement assisted quantum one-way communication complexity for all relations, which was first shown by Jain, Radhakrishnan and Sen [2005,2008]. We also present a  variant of protocol $\Prot$ in which Bob has some side information about the state with Alice. We show that in such a case, the amount of communication can be further reduced, based on the side information that Bob has.

Our second result provides a quantum analogue of the widely used classical correlated-sampling protocol. For example, Holenstein [2007] used the classical correlated-sampling protocol in his proof of a parallel-repetition theorem for two-player one-round games. 
\end{abstract}
\end{titlepage}

\section{Introduction}

{\em Relative entropy} is a widely used quantity of central importance in both classical and quantum information theory. In this paper we consider the following task. The notations used below are described in section \ref{sec:Preliminaries}.

\bigskip

\noindent $\Prot$: Given a register $A$, Alice gets to know the eigen-decomposition of a quantum state $\rho \in \mathcal{D}(A)$. Bob gets to know the eigen-decomposition of a quantum state $\sigma\in \mathcal{D}(A)$ such that $\text{supp}(\rho) \subset \text{supp}(\sigma)$. Both Alice and Bob know $ \relent{\rho}{\sigma} \defeq \Tr \rho \log \rho - \rho \log \sigma$, the relative entropy between $\rho$ and $\sigma$ and an error parameter $\ve$. Alice and Bob use shared entanglement and after communication of $\O((\relent{\rho}{\sigma}+1)/\ve^4)$ bits from Alice to Bob, Bob ends up with a quantum state $\tilde{\rho}$ such that $\F(\rho, \tilde{\rho}) \geq 1 - \ve$, where $\F(\cdot,\cdot)$ represents {\em fidelity}.

\bigskip 

This result can be considered as a non-commutative generalization of a result due to Braverman and Rao~\cite{Braverman2011} where they considered the special case when $\rho$ and $\sigma$ are classical probability distributions and the two parties only share public random coins. Their protocol, and slightly modified versions of it, were widely used to show several {\em direct sum} and {\em direct product} results in communication complexity, for example a direct sum theorem for all relations in the bounded-round public-coin communication model~\cite{Braverman2011}, direct product theorems for all relations in the public-coin one-way and public-coin bounded-round communication models~\cite{Jain:2011,Jain:2012,BravermanRWY:2013}.  A direct sum result for a relation $f$ in a model of communication  (roughly) states that in order to compute $k$ independent  instances of $f$ simultaneously, if we provide communication less than $k$ times the communication required to compute $f$ with the constant success probability $p<1$, then the success probability for computing all the $k$ instances of $f$ correctly is at most a constant $q<1$. A direct product result, which is a stronger result, states that in such a situation  the success probability for computing all the $k$ instances of $f$ correctly is at most $p^{-\Omega(k)}$.

Protocol $\Prot$ allows for compressing the communication in one-way entanglement-assisted quantum communication protocols to the {\em internal information} about the inputs carried by the message. Using this we obtain a direct-sum result for {\em distributional entanglement assisted quantum one-way communication complexity} for all relations. This direct-sum result was shown previously by Jain, Radhakrishnan and Sen~\cite{Jain:2005:PEM:1068502.1068658,Jain:2008} and they obtained this result via a protocol that allowed them compression to {\em external information} carried in the message\footnote{Compression to external and internal information can be thought of as one-shot communication analogues of the celebrated results by Shannon~\cite{Shannon:1948} and Slepian-Wolf~\cite{Slepian:1973} exhibiting compression of source to entropy and conditional entropy respectively.}. Their arguments are quite specific to one-way protocols and  do not seem to generalize to multi-round communication protocols. Our proof however, is along the lines of a proof which has been generalized to bounded-round classical protocols~\cite{Braverman2011} and hence it presents hope that our direct-sum result can also be generalized to  bounded-round quantum protocols. The protocol of Braverman and Rao~\cite{Braverman2011} was also used by Jain~\cite{Jain:2011} to obtain a direct-product for all relations in the model of one-way public-coin classical communication  and later extended to multiple round  public-coin classical communication~\cite{Jain:2012,BravermanRWY:2013}. Hence  protocol $\Prot$ also presents a hope of obtaining similar results for quantum communication protocols.

We also present a variant of protocol $\Prot$, with Bob possessing some side information about Alice's input. In such a case, the communication can be further reduced. 

\bigskip 

\noindent $\Prot'$: Given two registers $A$ and $B$, Alice and Bob know the description of a quantum channel $\E: \mathcal{L}(A)\rightarrow \mathcal{L}(B)$. Alice is given the eigen-decomposition of a state $\rho \in \mathcal{D}(A)$. Bob is given the eigen-decomposition of a state $\sigma \in A$ (such that $\text{supp}(\rho) \subset \text{supp}(\sigma)$) and the state $\rho'= \E(\rho)$. Let  $\relent{\rho}{\sigma}-\relent{\E(\rho)}{\E(\sigma)}$ and $\ve >0$ be known to Alice and Bob. There exists a protocol, in which Alice and Bob use shared entanglement and Alice sends $\O((\relent{\rho}{\sigma}-\relent{\E(\rho)}{\E(\sigma)}+1)/\ve^4)$ bits of communication to Bob, such that with probability at least $1-4\ve$, the state $\tilde{\rho}$ that Bob gets at the end of the protocol satisfies  $\F(\rho,\tilde{\rho})\geq 1-\ve$, where $\F(\cdot,\cdot)$ represents {\em fidelity} .

\bigskip 

In the second part of our paper, we present the following protocol, which can be considered as a quantum analogue of the widely used {\em classical correlated sampling} protocol. For example, Holenstein~\cite{Holenstein2007} has used the classical correlated sampling protocol in his proof of a {\em parallel-repetition theorem} for two-player one-round games.

\bigskip 

\noindent $\Prot_1:$ Given a register $A_1$, Alice gets to know the eigen-decomposition of a quantum state $\rho\in \mathcal{D}(A_1)$. Bob gets to know the eigen-decomposition of a quantum state $\sigma \in \mathcal{D}(A_1)$. Alice and Bob use shared entanglement, do local measurements (no communication) and at the end Alice outputs registers $A_1A_2$ and Bob outputs registers $B_1B_2$ such that the following holds:
\begin{enumerate}
\item $B_1\equiv A_1$ and $B_2\equiv A_2$.
\item The marginal state in register $A_1$ is $\rho$ and the marginal state in register $B_1$  is $\sigma$.
\item For any projective measurement $M  = \{M_1, \ldots , M_w\}$ such that $M_i \in \mathcal{L}(A_1A_2)$, the following holds. Let Alice perform $M$ on $A_1A_2$ and Bob perform $M$ on $B_1B_2$  and obtain outcomes $I \in [w], J \in [w]$ respectively. Then,
$$ \Pr[I = J]  \geq  \left(1-\sqrt{\onenorm{\rho-\sigma} - \frac{1}{4}\onenorm{\rho-\sigma}^2}\right)^{3} . $$
\end{enumerate}

Recently, Dinur, Steurer and Vidick~\cite{DinurSteurerVidick:2013} have shown another version of a  quantum correlated sampling protocol different from ours, and used it in their proof of a parallel-repetition theorem for two-prover one-round entangled projection games.

\subsection*{Our techniques}

Our protocol $\Prot$ is inspired by the protocol of Braverman and Rao~\cite{Braverman2011}, which as we mentioned, applies to the special case when inputs to Alice and Bob are classical probability distributions $P,Q$ respectively. Let us first assume the case when Alice and Bob know $c = \rminent{P}{Q} \defeq \min\{\lambda |~ P \leq 2^\lambda Q \}$, the {\em relative max-entropy} between $P$ and $Q$. In the protocol of~\cite{Braverman2011}, Alice and Bob share (as public coins) $\{(M_i,R_i)|~ i \in \mathbb{N}\}$, where each $(M_i,R_i)$ is independently and identically distributed uniformly over $\U\times[0,1]$, $\U$ being the support of $P$ and $Q$. Alice accepts index $i$ iff $ R_i \leq P(M_i) $ and Bob accepts index $i$ iff $ R_i \leq  2^c Q(M_i) $. It is easily argued that for the first index $j$ accepted by Alice, $M_j$ is distributed according to $P$. Braverman and Rao argue that Alice can communicate this index $j$ to Bob, with high probability, using communication of $\O(c)$ bits (for constant $\ve$), using crucially the fact that $P \leq 2^c Q $.  

In our protocol, Alice and Bob share infinite copies of the following quantum state 
\[\ket{\psi} \defeq \frac{1}{\sqrt{NK}}\sum_{i=1}^N\ket{i}^A\ket{i}^B\otimes \left(\sum_{m=1}^K\ket{m}^{A_1}\ket{m}^{B_1} \right),\]
where registers $A,B$ serve to sample a maximally mixed state in the support of $\rho, \sigma$ and the registers $A_1,B_1$ serve to sample uniform distribution in the interval $[0,1]$ (in the limit $K \rightarrow \infty$).  Again let us first assume the case when Alice and Bob know  $c = \rminent{\rho}{\sigma} \defeq \min\{\lambda |~ \rho \leq 2^\lambda \sigma \}$ (here $\leq$ represents the L\"{o}wner order), the relative max-entropy between $\rho$ and $\sigma$.  Let eigen-decomposition of $\rho$ be $ \sum_{i=1}^N a_i \ket{a_i}\bra{a_i}$ and eigen-decomposition of $\sigma$ be $\sum_{i=1}^N b_i \ket{b_i}\bra{b_i}$. Consider a projection $P_{AA_1}$ as defined below and $I_{AA_1}$ the identity operator on registers $A,A_1$. Alice performs a measurement $\{P_{AA_1},I_{AA_1}-P_{AA_1}\}$, on the register $AA_1$ of each copy of $\ket{\psi}$ and \textit{accepts} the index of a copy iff outcome of measurement corresponds to $P_{AA_1}$ (which we refer to as a success for Alice). 
\begin{equation*}
P_{AA_1}=\sum_{i=1}^N\ket{a_i}_A\bra{a_i}_A\otimes \left(\sum_{m=1}^{\lceil Ka_i \rceil}\ket{m}_{A_1}\bra{m}_{A_1}\right) .
\end{equation*}
Similarly, consider a projection $P_{BB_1}$ as defined below (for an appropriately chosen $\delta$) and $I_{BB_1}$ the identity operator on register $BB_1$. Bob performs a measurement $\{P_{BB_1},I_{BB_1}-P_{BB_1}\}$  on registers $BB_1$ on each copy of $\ket{\psi}$ and \textit{accepts} the index of a copy iff the outcome of measurement corresponds to $P_{BB_1}$ (which we refer to as a success for Bob).
\begin{equation*}
P_{BB_1}=\sum_{i=1}^N\ket{b_i}_B\bra{b_i}_B\otimes \left(\sum_{m=1}^{\text{min}\{\lceil 2^{c}K b_i/\delta\rceil, K\}}\ket{m}_{B_1}\bra{m}_{B_1}\right) .
\end{equation*}
Again it is easily argued that (in the limit $K \rightarrow \infty$) the marginal state in $B$ (and also in $A$), in the first copy of $\ket{\psi}$ on which Alice succeeds, is $\rho$.  Using crucially the fact that $\rho \leq 2^c \sigma$, we argue that after Alice's measurement succeeds in a copy, Bob's measurement also succeeds with high probability. Hence, by {\em gentle measurement lemma} (\cite{Winter:1999,Ogawa:2002}), the marginal state in register $B$ is not disturbed much, conditioned on success of both Alice and Bob.  We also argue that Alice can communicate the index of this copy to Bob with communication of $\O(c)$ bits (for constant $\ve$).

As can be seen, our protocol is a natural quantum analogue of the protocol of Braverman and Rao~\cite{Braverman2011}.  However, since $\rho$ and $\sigma$ may not commute, our analysis deviates significantly from the analysis of~\cite{Braverman2011}. We are required to show several new facts related to the non-commuting case while arguing that the protocol still works correctly.  

We then consider the case in which  $\relent{\rho}{\sigma}$ (instead of $\rminent{\rho}{\sigma}$) is known to Alice and Bob. The {\em quantum substate theorem}~\cite{Jain2002, JainNayak:2012} implies that there exists a quantum state $\rho'$, having high fidelity with $\rho$ such that  $\rminent{\rho'}{\sigma} = \O(\relent{\rho}{\sigma})$. We argue that our protocol is robust with respect to  small perturbations in Alice's input and hence works well for the pair $(\rho', \sigma)$ as well, and uses  communication $\O(\relent{\rho}{\sigma})$ bits. Again this requires us to show new facts related to the non-commuting case.

\subsection*{Related work} 
 Much progress has been made in the last decade towards proving direct sum and direct product conjectures in various models of communication complexity and information theory has played a crucial role in these works. Most of the proofs have build upon elegant one-shot  protocols for interesting information theoretic tasks. For example, consider the following task which is a special case of the task we consider in the protocol $\Prot$.

\bigskip

\noindent {\bf T1}:  Alice gets to know the eigen-decomposition of a quantum state $\rho$. Alice and Bob get to know the eigen-decomposition of a quantum state $\sigma$, such that $\text{supp}(\rho) \subset \text{supp}(\sigma)$. They also know $c \defeq \relent{\rho}{\sigma}$, the relative entropy between $\rho$ and $\sigma$ and an error parameter $\ve$. They  use shared entanglement and communication and at the end of the protocol, Bob ends up with a quantum state $\tilde{\rho}$ such that $\F(\rho, \tilde{\rho}) \geq 1 - \ve$.

\bigskip

Jain, Radhakrishnan and Sen in~\cite{Jain:2005:PEM:1068502.1068658,Jain:2008}, showed that this task  (for constant $\ve$) can be achieved with  communication $\O(\relent{\rho}{\sigma} +1)$ bits, and this led to direct sum theorems for all relations in  entanglement-assisted quantum one-way and entanglement-assisted quantum simultaneous message-passing communication models. They also considered the special case when the inputs to Alice and Bob are probability distributions $P,Q$ respectively and showed that sharing public random coins and $\O(\relent{P}{Q}+1))$ bits of  communication can achieve this task (for constant $\ve$). Later an improved result was obtained by Harsha, Jain, Mc. Allester and Radhakrishnan~\cite{Harsha:2010}, where they presented a protocol in which Bob is able to sample exactly from $P$ with expected communication $\relent{P}{Q}+2\log\relent{P}{Q}+\O(1)$.  This led to direct sum theorems for all relations in the public-coin  randomized one-way, public-coin simultaneous message passing~\cite{Jain:2005:PEM:1068502.1068658,Jain:2008} and public-coin  randomized bounded-round communication models~\cite{Harsha:2010}.

Our work strengthens their results by showing that $O(\relent{\rho}{\sigma})$ bits of communication is enough even if $\sigma$ is not known to Alice.

\suppress{
Now let us consider the following task.

\bigskip

\noindent {\bf T2:} Let $\U$ be a finite set. Alice gets to know functions $o_A, o_B, e_A : \U \rightarrow [0,1]$ and Bob gets to know functions  $o_B, e_B, e_A : \U \rightarrow [0,1]$, such that      the following functions form probability distributions on $\U$: $P(m) \defeq o_A(m) e_B(m) $, $Q(m) \defeq o_B(m) e_B(m)$ and $R(m) \defeq o_A(m) e_A(m)$. They also receive error parameter $\ve>0$ as common input. They use shared randomness, communication and at the end of the protocol Bob should sample from a distribution $P'$ such that $\F(P,  P') \geq 1 - \ve$.

\bigskip

Jain, Radhakrishnan and Sen in~\cite{Jain:2005:PEM:1068502.1068658,Jain:2008}, showed that this task (for constant $\ve$) can be achieved with a single message from Alice to Bob consisting of $\O((\relent{P}{Q}+1) 2^{(\relent{P}{R}+1)})$ bits. This was used by them to provide a round-independent direct-sum theorem for the {\em distributional} two-way communication complexity of all relations under product distributions. This result was strengthened by Braverman~\cite{Braverman2012} where he considered the case where $o_B$ is not known to Alice and $e_A$ is not known to Bob. He showed that in this case as well the task can be achieved using same communication. This helped in generalizing the round-independent direct-sum result of~\cite{Jain:2005:PEM:1068502.1068658,Jain:2008} to non-product distributions. Modified versions of Braverman's protocol were later extensively used for example by Braverman and Weinstein~\cite{BravermanWeinstein2011} to show that {\em information complexity} is lower bounded by the {\em discrepancy} bound, by Kerenidis, Laplante, Lerays, Roland, and  Xiao~\cite{Kerenidis2012}  to show that information complexity is lower bounded by {\em smooth-rectangle} bound and by Jain and Yao~\cite{Jain:2012c} to show a direct-product result for all relations in terms of the smooth-rectangle bound. 

Jain, Radhakrishnan and Sen in~\cite{Jain:2005:PEM:1068502.1068658,Jain:2008}  showed that the appropriate quantum version of the task {\bf T2} can also be achieved using similar communication. This implied a round-independent direct-sum result for the distributional two-way entanglement-assisted communication complexity of all relations under product distributions. Recently, using a claim obtained in their result, Jain, Pereszl\'{e}nyi and Yao~\cite{Jain:2014} showed a parallel repetition theorem for {\em two-player one-round entangled free-games}. 
}

Very recently, Touchette~\cite{Touchette:2015} introduced the notion of {\em quantum information cost} which generalizes the internal information cost in the classical communication to the quantum setting.  Moreover, he showed that in {\em bounded-round entanglement assisted quantum communication tasks}, the communication can be compressed to the quantum information cost based on the {\em state redistribution} protocol~\cite{DevetakY:2008,YardD:2009}. Using such a compression protocol, he showed a direct sum theorem for bounded round entanglement assisted quantum communication model. 

\subsection*{Organization} In section \ref{sec:Preliminaries}, we discuss our notations and relevant notions needed for our proofs. In Section~\ref{sec:main} we describe our one shot quantum protocol $\P$.  The direct sum result follows in Section~\ref{sec:directsum}. In Section~\ref{sec:qcorr} we present quantum correlated sampling. We conclude in Section ~\ref{sec:conclusion}

\section{Preliminaries}
\label{sec:Preliminaries}

In this section we present some notations, definitions, facts and lemmas that we will use later in our proofs.

\subsection*{Information theory}

For integer $n \geq 1$, let $[n]$ represent the set $\{1,2, \ldots, n\}$.  We let $\log$ represent logarithm to the base $2$ and $\ln$ represent logarithm to the base $\mathrm{e}$. Let $\finset{X}$ and $\finset{Y}$ be finite sets. $\finset{X}\times\finset{Y}$ represents the cross product of $\finset{X}$ and $\finset{Y}$. For a natural number $k$, we let 
$\finset{X}^k$ denote the set $\finset{X}\times\cdots\times\finset{X}$, the cross product of
$\finset{X}$, $k$ times.  Let $\mu$ be a probability distribution on $\finset{X}$. We let $\mu(x)$ represent the probability of $x\in\finset{X}$ according to $\mu$. We use the same symbol to represent a random variable and its distribution whenever it is clear from the context. The expectation value of function $f$ on $\finset{X}$ is defined as
$\expec{x \leftarrow X}{f(x)} \defeq \sum_{x \in \finset{X}} \prob{X=x}
\cdot f(x),$ where $x\leftarrow X$ means that $x$ is drawn according to distribution $X$.

Consider a Hilbert space $\H$ endowed with an inner product $\langle \cdot, \cdot \rangle$. The $\ell_1$ norm of an operator $X$ on $\H$ is $\onenorm{X}\defeq\Tr\sqrt{X^{\dag}X}$ and $\ell_2$ norm is $\norm{X}_2\defeq\sqrt{\Tr XX^{\dag}}$. A quantum state (or a density matrix or just a state) is a positive semi-definite matrix with trace equal to $1$. It is called {\em pure} if and only if the rank is $1$. A sub-normalized state is a positive semi-definite matrix with trace less than or equal to $1$. Let $\ket{\psi}$ be a unit vector on $\H$, that is $\langle \psi,\psi \rangle=1$.  With some abuse of notation, we use $\psi$ to represent the state and also the density matrix $\ketbra{\psi}$, associated with $\ket{\psi}$. 

Fix an orthonormal basis on $\H$, referred to as {\em computational basis}. Let $\conjugate{\ket{\psi}}$ represent the complex conjugation of $\ket{\psi}$, taken in the computational basis. A classical distribution $\mu$ can be viewed as a quantum state with non-diagonal entries $0$.  Given a quantum state $\rho$ on $\H$, {\em support of $\rho$}, called $\text{supp}(\rho)$ is the subspace of $\H$ spanned by all eigen-vectors of $\rho$ with non-zero eigenvalues.
 
A {\em quantum register} $A$ is associated with some Hilbert space $\H_A$. Define $|A| \defeq \dim(\H_A)$. Let $\mathcal{L}(A)$ represent the set of all linear operators on $\H_A$. We denote by $\mathcal{D}(A)$, the set of quantum states on the Hilbert space $\H_A$. State $\rho$ with subscript $A$ indicates $\rho_A \in \mathcal{D}(A)$. If two registers $A,B$ are associated with the same Hilbert space, we shall represent the relation by $A\equiv B$.  Composition of two registers $A$ and $B$, denoted $AB$, is associated with Hilbert space $\H_A \otimes \H_B$.  For two quantum states $\rho\in \mathcal{D}(A)$ and $\sigma\in \mathcal{D}(B)$, $\rho\otimes\sigma \in \mathcal{D}(AB)$ represents the tensor product (Kronecker product) of $\rho$ and $\sigma$. The identity operator on $\H_A$ (and associated register $A$) is denoted $I_A$. 

Let $\rho_{AB} \in \mathcal{D}(AB)$. We define
\[ \rho_{\reg{B}} \defeq \partrace{\reg{A}}{\rho_{AB}}
\defeq \sum_i (\bra{i} \otimes I_{\cspace{B}})
\rho_{AB} (\ket{i} \otimes I_{\cspace{B}}) , \]
where $\set{\ket{i}}_i$ is an orthonormal basis for the Hilbert space $\H_A$.
The state $\rho_B\in \mathcal{D}(B)$ is referred to as the marginal state of $\rho_{AB}$. Unless otherwise stated, a missing register from subscript in a state will represent partial trace over that register. Given a $\rho_A\in\mathcal{D}(A)$, a {\em purification} of $\rho_A$ is a pure state $\rho_{AB}\in \mathcal{D}(AB)$ such that $\partrace{\reg{B}}{\rho_{AB}}=\rho_A$. A purification of a quantum state is not unique.

A quantum {map} $\E: \mathcal{L}(A)\rightarrow \mathcal{L}(B)$ is a completely positive and trace preserving (CPTP) linear map (mapping states in $\mathcal{D}(A)$ to states in $\mathcal{D}(B)$). A {\em unitary} operator $U_A:\H_A \rightarrow \H_A$ is such that $U_A^{\dagger}U_A = U_A U_A^{\dagger} = I_A$. An {\em isometry}  $V:\H_A \rightarrow \H_B$ is such that $V^{\dagger}V = I_A$ and $VV^{\dagger} = I_B$. The set of all unitary operations on register $A$ is  denoted by $\mathcal{U}(A)$.

\begin{definition}
We shall consider the following information theoretic quantities. Let $A$ be a quantum register. Let $\varepsilon \geq 0$. 
\begin{enumerate}
\item {\bf Fidelity} For $\rho,\sigma \in \mathcal{D}(A)$, $$\F(\rho,\sigma)\defeq\onenorm{\sqrt{\rho}\sqrt{\sigma}}.$$ For classical probability distributions $P = \{p_i\}, Q =\{q_i\}$, $$\F(P,Q)\defeq \sum_i \sqrt{p_i \cdot q_i}.$$
\item {\bf Entropy} For $\rho\in\mathcal{D}(A)$, $$S(\rho_A) \defeq - \Tr(\rho_A\log\rho_A) .$$ 
\item {\bf Relative entropy} For $\rho,\sigma\in \mathcal{D}(A)$ such that $\text{supp}(\rho) \subset \text{supp}(\sigma)$, $$\relent{\rho}{\sigma} \defeq \Tr(\rho\log\rho) - \Tr(\rho\log\sigma) .$$ 
\item {\bf Relative max-entropy} For $\rho,\sigma\in \mathcal{D}(A)$ such that $\text{supp}(\rho) \subset \text{supp}(\sigma)$, $$ \rminent{\rho}{\sigma}  \defeq  \inf \{ \lambda \in \mathbb{R} : 2^{\lambda} \sigma \geq \rho \}  .$$
\item {\bf Mutual information} For $\rho_{AB}\in \mathcal{D}(AB)$, $$\mutinf{A}{B}_{\rho}\defeq S(\rho_A) + S(\rho_B)-S(\rho_{AB}) = \relent{\rho_{AB}}{\rho_A\otimes\rho_B}.$$
\item {\bf Conditional mutual information} For $\rho_{ABC}\in\mathcal{D}(ABC)$, $$\condmutinf{A}{B}{C}_{\rho}\defeq \mutinf{A}{BC}_{\rho}-\mutinf{A}{C}_{\rho}.$$
\end{enumerate}
\label{def:infquant}
\end{definition}

We will use the following facts.
\begin{fact}[\cite{NielsenC00} page 416]
	\label{fact:tracefidelityequi}
	For quantum states $\rho,\sigma \in \mathcal{D}(A)$, it holds that
	\[2(1-\F(\rho,\sigma))\leq\onenorm{\rho-\sigma}\leq2\sqrt{1-\F(\rho,\sigma)^2}.\]
For two pure states $\ket{\phi}$ and $\ket{\psi}$, we have
\[\onenorm{\phi-\psi}=2 \sqrt{1-\F(\phi, \psi)^2}=2 \sqrt{1-\abs{\langle\phi|\psi\rangle}^2}.\]
\end{fact}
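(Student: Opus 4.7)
The inequalities are the standard Fuchs--van de Graaf relations, and the plan is to prove the upper and lower bounds by rather different routes, with the pure-state equality falling out as a by-product of the upper-bound argument. I would prove the upper bound first, since the lower bound reduces cleanly to a one-line inequality for classical distributions once the variational formula for fidelity is available.

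For the upper bound $\onenorm{\rho-\sigma}\leq 2\sqrt{1-\F(\rho,\sigma)^2}$, I would invoke Uhlmann's theorem to choose purifications $\ket{\phi},\ket{\psi}$ of $\rho,\sigma$ on an enlarged register $AR$ such that $\abs{\braket{\phi}{\psi}}=\F(\rho,\sigma)$. Since the trace norm is monotone under partial trace (and more generally under CPTP maps), it suffices to bound $\onenorm{\ketbra{\phi}-\ketbra{\psi}}$. The operator $\ketbra{\phi}-\ketbra{\psi}$ is supported on the two-dimensional subspace $\mathrm{span}(\ket{\phi},\ket{\psi})$; writing $\ket{\psi}=\alpha\ket{\phi}+\beta\ket{\phi^{\perp}}$ with $\abs{\alpha}^2+\abs{\beta}^2=1$ and $\abs{\alpha}=\abs{\braket{\phi}{\psi}}$, a direct $2\times 2$ diagonalization yields eigenvalues $\pm\sqrt{1-\abs{\alpha}^2}$, hence $\onenorm{\ketbra{\phi}-\ketbra{\psi}}=2\sqrt{1-\abs{\braket{\phi}{\psi}}^2}$. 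Combining with monotonicity gives the desired mixed-state bound, and specializing to pure inputs immediately yields the equality displayed in the second line of the fact.

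For the lower bound $2(1-\F(\rho,\sigma))\leq\onenorm{\rho-\sigma}$, I would use the variational characterizations of both quantities. The trace distance satisfies $\frac{1}{2}\onenorm{\rho-\sigma}=\max_{0\leq E\leq I}\Tr(E(\rho-\sigma))$, and for any POVM $\{E_i\}$ producing distributions $p_i=\Tr(E_i\rho),\ q_i=\Tr(E_i\sigma)$, one has $\sum_i\abs{p_i-q_i}\leq\onenorm{\rho-\sigma}$. Quantum fidelity admits the dual characterization $\F(\rho,\sigma)=\min_{\{E_i\}}\sum_i\sqrt{p_i q_i}$, which I would import as a standard result. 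Picking a POVM attaining this minimum and invoking the elementary inequality
\[2\Bigl(1-\sum_i\sqrt{p_iq_i}\Bigr)=\sum_i(\sqrt{p_i}-\sqrt{q_i})^2\leq\sum_i\abs{\sqrt{p_i}-\sqrt{q_i}}\cdot(\sqrt{p_i}+\sqrt{q_i})=\sum_i\abs{p_i-q_i}\]
closes the argument.

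The main obstacle is the variational formula $\F(\rho,\sigma)=\min_{\{E_i\}}\sum_i\sqrt{p_iq_i}$ used in the lower bound: this is essentially the dual of Uhlmann's theorem and carries the real analytic content of the Fuchs--van de Graaf relation. Uhlmann's theorem itself is standard but already nontrivial, being a consequence of the polar decomposition and the freedom in purifications. Everything else in the proof is either a two-dimensional linear algebra computation or the elementary inequality relating $\ell_1$ distance to squared Hellinger distance for probability vectors.
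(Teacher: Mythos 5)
Your proposal is correct, and since the paper states this fact without proof (citing Nielsen--Chuang), the right comparison is with that reference: your argument is exactly the standard Fuchs--van de Graaf proof given there, namely Uhlmann purifications plus monotonicity of the trace norm (with the $2\times 2$ pure-state computation yielding the displayed equality) for the upper bound, and the measurement characterization $\F(\rho,\sigma)=\min_{\{E_i\}}\sum_i\sqrt{p_iq_i}$ combined with the classical Hellinger-versus-$\ell_1$ inequality for the lower bound. Note only that no attainment of the minimum is needed: the chain $\sum_i\sqrt{p_iq_i}\geq 1-\tfrac{1}{2}\sum_i\abs{p_i-q_i}\geq 1-\tfrac{1}{2}\onenorm{\rho-\sigma}$ holds for every POVM, so the bound follows directly from the infimum.
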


\begin{fact}[\cite{stinespring55}](\textbf{Stinespring representation})\label{stinespring}
Let $\E(\cdot): \mathcal{L}(A)\rightarrow \mathcal{L}(B)$ be a quantum operation. There exists a Hilbert space $C$ and an unitary $U: A \otimes B \otimes C \rightarrow A \otimes B\otimes C$ such that $\E(\omega)=\Tr_{A,C}\br{U (\omega  \otimes \ketbra{0}^{B,C}) U^{\dagger}}$. Stinespring representation for a channel is not unique. 
\end{fact}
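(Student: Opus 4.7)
My plan is to reduce Stinespring's theorem to the Kraus representation of CPTP maps and then lift the resulting isometry to a unitary on the full space $\H_A \otimes \H_B \otimes \H_C$ by exploiting the slack dimensions provided by the $A$ register on the output side.

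First, I would invoke the Kraus representation: every CPTP map $\E:\mathcal{L}(A)\rightarrow \mathcal{L}(B)$ admits operators $\{M_k\}_{k=1}^{K}$ with $M_k:\H_A\to\H_B$ satisfying the completeness relation $\sum_{k=1}^{K} M_k^{\dagger}M_k = I_A$ and $\E(\omega)=\sum_{k} M_k \omega M_k^{\dagger}$. (One clean derivation diagonalizes the Choi matrix of $\E$ and reads off the Kraus operators from its spectral decomposition.) Next, choose any Hilbert space $\H_C$ equipped with a distinguished orthonormal basis $\{\ket{k}_C\}$ of cardinality at least $K$, and define $V:\H_A \to \H_B\otimes \H_C$ by
\[V\ket{\psi} \defeq \sum_{k=1}^{K} M_k\ket{\psi}\otimes\ket{k}_C.\]
The completeness relation immediately gives $V^{\dagger}V=I_A$, so $V$ is an isometry, and tracing out $C$ yields $\partrace{C}{V\omega V^{\dagger}}=\E(\omega)$.

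The second step is to promote $V$ to a unitary $U$ on $\H_A \otimes \H_B \otimes \H_C$ realizing the displayed Stinespring identity. I would define $U$ first on the $|A|$-dimensional subspace $\H_A \otimes \ket{0}_B \otimes \ket{0}_C$ by
\[U\br{\ket{\psi}_A \otimes \ket{0}_B \otimes \ket{0}_C} \defeq \ket{0}_A \otimes V\ket{\psi},\]
which is norm-preserving by the isometry property of $V$. Since the ambient space has dimension $|A||B||C|\geq |A|$, this partial isometry extends to a bona fide unitary $U$ by sending any orthonormal basis of the orthogonal complement of the input subspace to any orthonormal basis of the orthogonal complement of the image (both complements have equal dimension). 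A short direct calculation then confirms
\[\partrace{A,C}{U\br{\omega \otimes \ketbra{0}^{B,C}}U^{\dagger}} = \partrace{C}{V\omega V^{\dagger}} = \E(\omega),\]
completing the existence proof.

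The only step that is not entirely mechanical is the extension from partial isometry to full unitary, but this is unobstructed precisely because the $A$ register on the output side supplies enough room to absorb anything $V$ might do. The considerable freedom present in this completion, together with the unitary freedom in the choice of Kraus operators and the ability to enlarge $\H_C$ further, is exactly what accounts for the non-uniqueness of the Stinespring representation asserted in the last line of the fact.
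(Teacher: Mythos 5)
Your proof is correct: the Kraus decomposition, the isometry $V\ket{\psi}=\sum_k M_k\ket{\psi}\otimes\ket{k}_C$, and the extension of the partial isometry on $\H_A\otimes\ket{0}_B\otimes\ket{0}_C$ to a full unitary (possible since both orthogonal complements have dimension $|A||B||C|-|A|$) together give exactly the dilation stated, and your remarks correctly account for the non-uniqueness. The paper itself offers no proof of this fact, citing Stinespring's original work, and your argument is the standard finite-dimensional dilation proof one would supply.
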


\begin{fact}[\cite{barnum96},\cite{lindblad75}]
	\label{fact:monotonequantumoperation}
For states $\rho$, $\sigma \in \mathcal{D}(A)$, and quantum operation $\E(\cdot):\mathcal{L}(A)\rightarrow \mathcal{L}(B)$, it holds that
\begin{align*}
	\onenorm{\E(\rho) - \E(\sigma)} \leq \onenorm{\rho - \sigma} \quad \mbox{and} \quad \F(\E(\rho),\E(\sigma)) \geq \F(\rho,\sigma) \quad \mbox{and} \quad \relent{\rho}{\sigma}\geq \relent{\E(\rho)}{\E(\sigma)}.
\end{align*}
In particular, for bipartite states $\rho^{AB},\sigma^{AB}\in \mathcal{D}(AB)$, it holds that
\begin{align*}
	\onenorm{\rho^{AB} - \sigma^{AB}} \geq \onenorm{\rho^A - \sigma^A} \quad \mbox{and} \quad \F(\rho^{AB},\sigma^{AB}) \leq \F(\rho^A,\sigma^A) \quad \mbox{and} \quad \relent{\rho_{AB}}{\sigma_{AB}}\geq \relent{\rho_A}{\sigma_A} .
\end{align*}
\end{fact}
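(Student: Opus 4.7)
The three inequalities are the standard data-processing bounds for CPTP maps; I would prove each separately, with the relative-entropy bound being the genuinely hard one. The bipartite corollaries then follow immediately by applying the main statements to the partial-trace channel $\E = \partrace{B}{\cdot}$, which is itself CPTP, so I focus on the three primary inequalities.

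\emph{Trace-norm contractivity.} Take the Jordan--Hahn decomposition $\rho - \sigma = P - N$, with $P, N \geq 0$ on orthogonal supports; then $\onenorm{\rho - \sigma} = \Tr P + \Tr N$. Since $\E$ is linear, positive, and trace-preserving,
\begin{align*}
\onenorm{\E(\rho) - \E(\sigma)} \;\leq\; \onenorm{\E(P)} + \onenorm{\E(N)} \;=\; \Tr \E(P) + \Tr \E(N) \;=\; \Tr P + \Tr N \;=\; \onenorm{\rho - \sigma}.
\end{align*}

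\emph{Fidelity monotonicity.} I would invoke Uhlmann's theorem: $\F(\rho,\sigma) = \max |\braket{\psi}{\phi}|$ where the maximum ranges over all purifications in a common extended space. Fix purifications $\ket{\psi_\rho}, \ket{\psi_\sigma}$ that attain this maximum. By the Stinespring representation (Fact~\ref{stinespring}), $\E$ can be realized as conjugation by a unitary $U$ on a dilated system, followed by a partial trace. The vectors $U(\ket{\psi_\rho} \otimes \ket{0})$ and $U(\ket{\psi_\sigma} \otimes \ket{0})$ are then purifications of $\E(\rho)$ and $\E(\sigma)$, and their inner product equals $\braket{\psi_\rho}{\psi_\sigma}$. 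Applying Uhlmann on the $\E$-side therefore gives $\F(\E(\rho),\E(\sigma)) \geq |\braket{\psi_\rho}{\psi_\sigma}| = \F(\rho,\sigma)$.

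\emph{Relative-entropy monotonicity (Lindblad).} This is the deepest of the three. My plan is to reduce via Stinespring to two subcases: (i) $\E$ is conjugation by an isometry $V$, and (ii) $\E$ is a partial trace. Case (i) is immediate, since conjugation by an isometry preserves spectra on the relevant support, so both $\Tr(\rho \log \rho)$ and $\Tr(\rho \log \sigma)$ are unchanged, giving equality. Case (ii) — the partial-trace inequality $\relent{\rho_{AB}}{\sigma_{AB}} \geq \relent{\rho_A}{\sigma_A}$ — is the nontrivial ingredient, and I would derive it from Lieb's concavity theorem for $(A,B) \mapsto \Tr(K^{\dagger} A^{1-t} K B^{t})$, which yields joint convexity of $(X,Y) \mapsto \relent{X}{Y}$ and, via the Uhlmann--Lindblad chain, monotonicity under partial trace (equivalently, strong subadditivity of von Neumann entropy). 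The main obstacle is Lieb's concavity itself, which I would cite rather than reprove. Composing cases (i) and (ii) with the Stinespring dilation of an arbitrary $\E$ then closes the full data-processing inequality.
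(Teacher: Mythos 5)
Your proposal is correct: trace-norm contractivity from positivity plus trace preservation via the Jordan--Hahn decomposition, fidelity monotonicity via Uhlmann's theorem combined with a Stinespring dilation, and relative-entropy monotonicity via the Lindblad reduction to isometries plus partial trace with Lieb's concavity (joint convexity) as the cited core ingredient are all sound, and the bipartite statements do follow by taking $\E$ to be the partial trace. The paper itself gives no proof of this fact, citing \cite{barnum96} and \cite{lindblad75} instead, and your sketch is essentially a faithful reconstruction of the standard arguments in those references, so there is nothing to reconcile.
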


\begin{fact}[\cite{Watrouslecturenote} Lemma 4.41.]\label{fact:onenorm vs twonorm}
Let $A, B$ be two positive semidefinite operators on Hilbert space $\H$. Then
\[\onenorm{A-B}\geq\norm{\sqrt{A}-\sqrt{B}}^2_2.\]
\end{fact}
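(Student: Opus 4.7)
The inequality is the Powers--St\o rmer inequality, and I would prove it as follows. Set $C \defeq \sqrt{A} - \sqrt{B}$ and $D \defeq \sqrt{A} + \sqrt{B}$. Both are Hermitian (since $A,B$ are positive semidefinite), and expanding $CD$ and $DC$ gives the operator identity
\[ CD + DC \;=\; 2(A - B). \]
Let $S$ be the sign operator of $C$, i.e.\ $S = E_+ - E_-$ where $E_\pm$ are the spectral projectors onto the positive and negative parts of $C$. Then $\norm{S}_{\infty}\leq 1$ and $SC = CS = |C|$.

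The first analytic step uses trace--norm duality: for any Hermitian $M$ and any Hermitian contraction $S$ one has $\Tr(SM) \leq \norm{S}_{\infty}\cdot \onenorm{M} \leq \onenorm{M}$. Applying this to $M = A - B$ together with the identity above and cyclicity of the trace,
\[ \onenorm{A-B} \;\geq\; \Tr\br{S(A-B)} \;=\; \tfrac{1}{2}\Tr\br{S(CD+DC)} \;=\; \Tr\br{|C|\,D}, \]
where in the last equality I use $\Tr(SCD)=\Tr(|C|D)$ and $\Tr(SDC)=\Tr(CSD)=\Tr(|C|D)$.

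The remaining step is to show $\Tr(|C|\,D) \geq \norm{C}_2^2$. I would pass to an orthonormal eigenbasis $\set{\ket{v_i}}$ of the Hermitian operator $C$ with real eigenvalues $\lambda_i$. The key observation is that $D + C = 2\sqrt{A} \geq 0$ and $D - C = 2\sqrt{B} \geq 0$ as operators, hence sandwiching against $\ket{v_i}$ yields $\bra{v_i} D \ket{v_i} \geq |\lambda_i|$ for every $i$. Therefore
\[ \Tr\br{|C|\,D} \;=\; \sum_i |\lambda_i|\, \bra{v_i}D\ket{v_i} \;\geq\; \sum_i \lambda_i^2 \;=\; \norm{C}_2^2 \;=\; \norm{\sqrt{A}-\sqrt{B}}_2^2, \]
and chaining the inequalities proves the fact. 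The only non-routine point, which I expect to be the main (though still mild) obstacle, is precisely this diagonal comparison of $D$ with $|C|$ in the eigenbasis of $C$: one cannot hope for $D \geq |C|$ as a genuine operator inequality, but the quadratic-form bound along eigenvectors of $C$ is exactly what is needed to marry it with the sign-trick $\Tr(S(A-B)) = \Tr(|C|D)$.
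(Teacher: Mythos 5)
Your proof is correct: the identity $CD+DC=2(A-B)$, the sign-operator bound $\onenorm{A-B}\geq\Tr\br{S(A-B)}=\Tr\br{|C|D}$, and the diagonal comparison $\bra{v_i}D\ket{v_i}\geq|\lambda_i|$ (from $D\pm C=2\sqrt{A},\,2\sqrt{B}\geq 0$) all check out, and together they give $\onenorm{A-B}\geq\Tr\br{|C|D}\geq\norm{C}_2^2$ with no gaps. The paper itself offers no proof of this fact---it only cites Watrous, Lemma 4.41---and your argument is exactly the standard Powers--St\o rmer derivation found there, so there is nothing further to reconcile.
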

\begin{fact}\label{fact:fidelityvstrace}
Given two quantum states $\rho$ and $\sigma$, 
$$\Tr\sqrt{\rho}\sqrt{\sigma}\geq 1 - \frac{1}{2} \onenorm{ \rho - \sigma} \geq 1-\sqrt{1-\F(\rho,\sigma)^2} .$$
\end{fact}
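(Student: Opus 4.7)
The statement splits into two independent inequalities, and both should fall out directly from the facts already collected in the preliminaries. The plan is to handle each inequality separately, using Fact \ref{fact:onenorm vs twonorm} for the first and Fact \ref{fact:tracefidelityequi} for the second.

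For the left inequality $\Tr\sqrt{\rho}\sqrt{\sigma}\geq 1-\tfrac{1}{2}\onenorm{\rho-\sigma}$, I would apply Fact \ref{fact:onenorm vs twonorm} to the positive semidefinite operators $\rho$ and $\sigma$, which yields
\[
\onenorm{\rho-\sigma}\;\geq\;\norm{\sqrt{\rho}-\sqrt{\sigma}}_2^{\,2}.
\]
Expanding the right-hand side using $\norm{X}_2^2=\Tr(X^{\dagger}X)$ and cyclicity of trace gives
\[
\norm{\sqrt{\rho}-\sqrt{\sigma}}_2^{\,2}\;=\;\Tr\rho+\Tr\sigma-2\Tr\sqrt{\rho}\sqrt{\sigma}\;=\;2-2\Tr\sqrt{\rho}\sqrt{\sigma},
\]
where I used $\Tr\rho=\Tr\sigma=1$ and the fact that $\Tr\sqrt{\rho}\sqrt{\sigma}=\Tr\sqrt{\sigma}\sqrt{\rho}$ is real. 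Rearranging yields exactly $\Tr\sqrt{\rho}\sqrt{\sigma}\geq 1-\tfrac{1}{2}\onenorm{\rho-\sigma}$.

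For the right inequality $1-\tfrac{1}{2}\onenorm{\rho-\sigma}\geq 1-\sqrt{1-\F(\rho,\sigma)^2}$, it suffices to show $\onenorm{\rho-\sigma}\leq 2\sqrt{1-\F(\rho,\sigma)^2}$, which is precisely the upper bound in the Fuchs--van de Graaf inequalities stated in Fact \ref{fact:tracefidelityequi}. Flipping the sign and adding $1$ to both sides gives the desired bound.

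I do not anticipate any real obstacle here: both inequalities reduce to a one-line application of an already-stated fact, together with the elementary identity $\norm{\sqrt{\rho}-\sqrt{\sigma}}_2^{\,2}=2-2\Tr\sqrt{\rho}\sqrt{\sigma}$. The only mild subtlety worth being explicit about is that $\Tr\sqrt{\rho}\sqrt{\sigma}$ in the statement is the (real) trace of the product, which is in general smaller than the fidelity $\F(\rho,\sigma)=\onenorm{\sqrt{\rho}\sqrt{\sigma}}$; the chain of inequalities is therefore genuinely a lower bound on this trace, tighter than simply writing $\F(\rho,\sigma)$ in its place.
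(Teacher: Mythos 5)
Your proof is correct and is essentially identical to the paper's: it chains Fact~\ref{fact:tracefidelityequi} (the Fuchs--van de Graaf upper bound) with Fact~\ref{fact:onenorm vs twonorm} and the identity $\norm{\sqrt{\rho}-\sqrt{\sigma}}_2^2 = 2 - 2\Tr\sqrt{\rho}\sqrt{\sigma}$, exactly as the paper does, merely splitting the single chain of inequalities into two separately stated steps.
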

\begin{proof}
By  Facts~\ref{fact:onenorm vs twonorm} and~\ref{fact:tracefidelityequi},
\[2\sqrt{1-\F(\rho,\sigma)^2}\geq \onenorm{ \rho - \sigma } \geq\norm{\sqrt{\rho}-\sqrt{\sigma}}_2^2=2-2 \cdot \Tr\br{\sqrt{\rho}\sqrt{\sigma}}. \myqedhere\]
\end{proof}


\begin{fact}[Joint concavity of fidelity][\cite{Watrouslecturenote}, Proposition 4.7]\label{fact:fidconcave}
Given states $\rho_1,\rho_2\ldots\rho_k,\sigma_1,\sigma_2\ldots\sigma_k$ and positive numbers $p_1,p_2\ldots p_k$ such that $\sum_ip_i=1$. Then $$\F(\sum_ip_i\rho_i,\sum_ip_i\sigma_i)\geq \sum_ip_i\F(\rho_i,\sigma_i).$$
\end{fact}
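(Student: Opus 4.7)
The plan is to deduce joint concavity from Uhlmann's theorem, which states that $\F(\rho,\sigma) = \max |\langle \psi | \phi \rangle|$ where the maximum ranges over all purifications $|\psi\rangle$ of $\rho$ and $|\phi\rangle$ of $\sigma$ in a common Hilbert space. The overall strategy is to exhibit a specific pair of purifications for the mixtures $\sum_i p_i \rho_i$ and $\sum_i p_i \sigma_i$ whose inner product equals $\sum_i p_i \F(\rho_i, \sigma_i)$, and then invoke Uhlmann to turn this concrete inner product into the desired lower bound on the fidelity of the mixtures.

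Concretely, I would first apply Uhlmann's theorem to each pair $(\rho_i, \sigma_i)$ separately to obtain purifications $|\psi_i\rangle_{AR}$ and $|\phi_i\rangle_{AR}$ in some common extension register $R$ (taken large enough by padding) such that $\langle \psi_i | \phi_i \rangle = \F(\rho_i, \sigma_i)$; the phase of each purification can be fixed independently so that each such inner product is a non-negative real. Next, introduce an auxiliary ``pointer'' register $E$ with orthonormal basis $\{|i\rangle_E\}_i$, and define
\[|\Psi\rangle_{ARE} \defeq \sum_i \sqrt{p_i}\, |\psi_i\rangle_{AR} \otimes |i\rangle_E, \qquad |\Phi\rangle_{ARE} \defeq \sum_i \sqrt{p_i}\, |\phi_i\rangle_{AR} \otimes |i\rangle_E.\]
Tracing out $RE$ and using orthogonality of the $\{|i\rangle_E\}$ to kill the cross terms gives $\partrace{RE}{\ketbra{\Psi}} = \sum_i p_i \rho_i$, and similarly $\partrace{RE}{\ketbra{\Phi}} = \sum_i p_i \sigma_i$, so these are genuine purifications of the two mixtures on the same composite system.

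Finally, a direct calculation shows
\[\langle \Psi | \Phi \rangle = \sum_{i,j} \sqrt{p_i p_j}\, \langle \psi_i | \phi_j \rangle\, \langle i | j \rangle_E = \sum_i p_i \langle \psi_i | \phi_i \rangle = \sum_i p_i\, \F(\rho_i, \sigma_i),\]
where the second equality uses orthogonality of the $E$-basis and the third uses the phase-aligned Uhlmann choice. Applying Uhlmann's theorem once more to the mixtures themselves yields $\F(\sum_i p_i \rho_i,\, \sum_i p_i \sigma_i) \geq |\langle \Psi | \Phi \rangle| = \sum_i p_i\, \F(\rho_i, \sigma_i)$, which is the claimed inequality. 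I do not expect any genuine obstacle here; the only mild delicacy is the simultaneous phase alignment of the $|\psi_i\rangle, |\phi_i\rangle$ across $i$, which is free since multiplying any purification by a scalar phase does not change the reduced state.
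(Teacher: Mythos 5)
Your proof is correct: the choice of phase-aligned Uhlmann purifications $\ket{\psi_i},\ket{\phi_i}$, the pointer-register construction killing the cross terms under the partial trace, and the final application of Uhlmann's theorem to the mixtures all go through, and the fidelity convention $\F(\rho,\sigma)=\onenorm{\sqrt{\rho}\sqrt{\sigma}}$ used in the paper is exactly the one for which Uhlmann's theorem holds as you state it. The paper offers no proof of its own (it cites Watrous's lecture notes, Proposition 4.7), and your argument is the standard one given there, so there is nothing further to reconcile.
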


\begin{fact}[\cite{JainRS09,JainNayak:2012}](\textbf{Quantum substrate theorem})\label{fact:substrate}
Given $\rho, \sigma \in \mathcal{D}(A)$, such that $\text{supp}(\rho) \subset \text{supp}(\sigma)$. For any $\ve>0$, there exists $\rho'\in \mathcal{D}(A)$ such that
\[\F(\rho,\rho')\geq1-\ve\quad \mbox{and}\quad \rminent{\rho'}{\sigma}\leq\frac{\relent{\rho}{\sigma}+1}{\ve}+\log\frac{1}{1-\ve}.\]
\end{fact}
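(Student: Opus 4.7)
The plan is to establish the theorem in two stages: first prove the commutative (classical) version via a Markov-type truncation argument, then lift to the non-commuting case via a spectral truncation of the operator $K = \sigma^{-1/2} \rho \sigma^{-1/2}$.

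For the commutative case, simultaneous diagonalisation reduces the question to distributions $P$, $Q$ on a finite set with $\relent{P}{Q} = c$. I would define the ``bad'' set $B = \cbr{x : \log\br{P(x)/Q(x)} > (c+1)/\ve}$ and apply Markov's inequality to the non-negative part of the log-likelihood-ratio random variable, handling the negative contributions via the elementary inequality $-P(x)\log(P(x)/Q(x)) \leq P(x)$ (or via Gibbs' inequality on the restricted support) to absorb them into the ``$+1$''. This yields $P(B) \leq \ve$. Taking $P'$ to be $P$ restricted to $B^{c}$ and renormalised gives $\F(P,P') \geq 1 - \ve$ together with the pointwise bound $P' \leq \br{2^{(c+1)/\ve}/(1-\ve)} Q$, so that $\rminent{P'}{Q} \leq (c+1)/\ve + \log(1/(1-\ve))$, matching the theorem. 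The $\log(1/(1-\ve))$ term appears precisely as the cost of renormalisation after truncation.

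For the general quantum case, the natural candidate is $\rho' = \sigma^{1/2} \Pi K \Pi \sigma^{1/2}/N$, where $K = \sigma^{-1/2}\rho\sigma^{-1/2}$ is positive semidefinite on $\text{supp}(\sigma)$ with $\Tr(\sigma K) = 1$, $\Pi$ is the spectral projector of $K$ onto eigenvalues at most $\gamma = 2^{(\relent{\rho}{\sigma}+1)/\ve}$, and $N$ is a normalising constant. The operator inequality $\Pi K \Pi \leq \gamma \Pi \leq \gamma I$ immediately gives $\rho' \leq (\gamma/N)\sigma$, so the Löwner bound is automatic once we show $N \geq 1 - O(\ve)$. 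The leakage is $1 - N = \Tr\br{\sigma (K - \Pi K \Pi)}$, which I would try to bound by a Markov-type estimate on the spectrum of $K$, and the fidelity bound $\F(\rho,\rho') \geq 1-\ve$ would then follow from the gentle measurement lemma applied to $\Pi$ viewed as a measurement on $\rho$.

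The main obstacle is exactly the gap between the classical and quantum truncations: in the commuting case the Markov bound applies to $\log K = \log\rho - \log\sigma$ and yields the relative-entropy control directly, but in the non-commuting case $\log K \neq \log\rho - \log\sigma$, so $\Tr(\rho \log K)$ is not \emph{a priori} comparable to $\relent{\rho}{\sigma}$. I would try to bridge this gap by first pinching $\rho$ in the eigenbasis of $\sigma$ to obtain a commuting pair $\br{\widehat{\rho}, \sigma}$ with $\relent{\widehat{\rho}}{\sigma} \leq \relent{\rho}{\sigma}$ (monotonicity under the pinching channel), applying the classical argument to $\widehat{\rho}$, and then using an operator-Jensen / Lieb-concavity type inequality to transfer the spectral bound on $\widehat{K} = \sigma^{-1/2}\widehat{\rho}\sigma^{-1/2}$ back to a spectral bound on $K$, losing only a constant factor that can be absorbed into the ``$+1$'' in the numerator. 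I expect this transfer step, and the careful accounting of non-commutative corrections in the fidelity estimate, to be the most technical part of the proof.
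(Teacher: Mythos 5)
The paper does not prove this statement at all: it is imported as a black box from \cite{JainRS09,JainNayak:2012}, so there is no in-paper argument to compare against, and your proposal has to stand on its own. Its classical half is essentially the standard (and correct) Markov-truncation proof of the classical substate theorem, up to a small slip: the pointwise bound $-P(x)\log(P(x)/Q(x))\leq P(x)$ is false; what one actually uses is the summed bound $\sum_x P(x)\bigl[\log(Q(x)/P(x))\bigr]_+\leq \log e$, which still lets the negative part be absorbed into the ``$+1$''.

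The quantum half, however, has a genuine gap, and it sits exactly where you flag it, but the bridges you propose would not close it. (a) The pinching transfer loses more than a constant: the pinching inequality gives $\rho\leq\abs{\mathrm{spec}(\sigma)}\,\widehat{\rho}$, so comparing spectral truncations of $K=\sigma^{-1/2}\rho\sigma^{-1/2}$ with those of $\widehat{K}$ costs an additive $\log\abs{\mathrm{spec}(\sigma)}$, which can be of order $\log N$ and cannot be absorbed into the numerator; no operator-Jensen or Lieb-concavity argument yields a dimension-free comparison here. (b) A Markov estimate on the spectrum of $K$ controls the tail with respect to the weights $\kappa_k\bra{k}\sigma\ket{k}$ by the quantity $\Tr\br{\sigma K\log K}$, which is the Belavkin--Staszewski divergence; this is an upper bound on $\relent{\rho}{\sigma}$ but is not bounded above by it, so the tail bound you get is in terms of the wrong divergence. (c) Even granting the normalization bound, the fidelity step fails: $\rho'\propto\sigma^{1/2}\Pi K\Pi\sigma^{1/2}$ is not of the form $\Pi'\rho\Pi'/\Tr(\Pi'\rho)$ for any projector $\Pi'$, so the gentle measurement lemma (Fact~\ref{fact:gentlemeasurement}) does not apply, and a small discarded weight $\Tr\br{\sigma K(I-\Pi)}$ does not by itself control $\F(\rho,\rho')$. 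The known route instead applies data processing to the two-outcome measurement $\set{\Pi, I-\Pi}$, with $\Pi$ the projector onto the positive eigenspace of $\rho-2^{\lambda}\sigma$, which bounds $\Tr(\Pi\rho)\leq\ve$ directly in terms of $\relent{\rho}{\sigma}$; but even then one cannot simply take $\rho'\propto(I-\Pi)\rho(I-\Pi)$, since $(I-\Pi)\rho(I-\Pi)\leq 2^{\lambda}(I-\Pi)\sigma(I-\Pi)$ does not imply $(I-\Pi)\rho(I-\Pi)\leq 2^{\lambda}\sigma$ (projection compression is not monotone in the L\"{o}wner order). Handling exactly this obstruction is the substantive content of \cite{JainRS09,JainNayak:2012}, via observational divergence plus a minimax argument, respectively a purification-based (Uhlmann-type) construction, neither of which is a spectral truncation of $K$.
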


\begin{fact}[\cite{Winter:1999,Ogawa:2002}](\textbf{Gentle measurement lemma})\label{fact:gentlemeasurement}
Let $\rho\in \mathcal{D}(A)$  and $\Pi$ be a projector. Then,
\[\F(\rho,\frac{\Pi \rho \Pi}{\Tr \Pi \rho })\geq \sqrt{\Tr \Pi \rho}.\]
\end{fact}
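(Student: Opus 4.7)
The plan is to reduce the claim to a single inner product computation via Uhlmann's theorem, which characterizes fidelity as the maximum overlap over purifications of the two states on a common reference. Let $p \defeq \Tr \Pi \rho$, which we may assume is strictly positive (otherwise the quotient $\Pi \rho \Pi / \Tr \Pi \rho$ is undefined and there is nothing to prove).

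First I would pick any purification $\ket{\psi}_{AR} \in \H_A \otimes \H_R$ of $\rho$, and apply $\Pi \otimes I_R$ to obtain the sub-normalized vector $(\Pi \otimes I_R)\ket{\psi}$. A direct partial-trace calculation, using that $\Pi$ acts only on the $A$ register, gives
\[
\partrace{R}{(\Pi \otimes I_R)\ketbra{\psi}(\Pi \otimes I_R)} = \Pi \rho \Pi,
\]
with squared norm $\bra{\psi}(\Pi \otimes I_R)\ket{\psi} = \Tr \Pi \rho = p$. Hence $\ket{\phi} \defeq \tfrac{1}{\sqrt{p}}(\Pi \otimes I_R)\ket{\psi}$ is a normalized purification of $\Pi \rho \Pi / p$ on the same composite space as $\ket{\psi}$.

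Next I would compute the overlap between these two purifications directly:
\[
\braket{\psi}{\phi} \;=\; \frac{1}{\sqrt{p}}\,\bra{\psi}(\Pi \otimes I_R)\ket{\psi} \;=\; \frac{p}{\sqrt{p}} \;=\; \sqrt{p}.
\]
Uhlmann's theorem then yields $\F(\rho, \Pi \rho \Pi / p) \geq \abs{\braket{\psi}{\phi}} = \sqrt{\Tr \Pi \rho}$, which is exactly the claimed inequality.

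I do not anticipate a real obstacle: the only things to check are that $(\Pi \otimes I_R)\ket{\psi}$ is automatically a (sub-normalized) purification of $\Pi \rho \Pi$, and that one is allowed to lower bound $\F$ by the overlap of any fixed pair of purifications. For a self-contained alternative that avoids Uhlmann, one can instead compute $\F(\rho, \Pi\rho\Pi/p) = \onenorm{\sqrt{\rho}\sqrt{\Pi\rho\Pi}}/\sqrt{p}$ directly: using the polar decomposition $\sqrt{\rho}\,\Pi = U \sqrt{\Pi\rho\Pi}$ and the observation that $\sqrt{\Pi\rho\Pi}$ is supported in the range of $\Pi$, one obtains $\sqrt{\rho}\sqrt{\Pi\rho\Pi} = \sqrt{\rho}\,\Pi \sqrt{\Pi\rho\Pi} = U(\Pi\rho\Pi)$, whose $1$-norm equals $\Tr \Pi \rho = p$; dividing by $\sqrt{p}$ recovers the same bound (in fact as an equality).
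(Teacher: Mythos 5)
Your proof is correct and follows essentially the same route as the paper: purify $\rho$, note that $(\Pi\otimes I)\ket{\psi}$ purifies $\Pi\rho\Pi$, and bound the fidelity by the overlap $\bra{\psi}(\Pi\otimes I)\ket{\psi}/\sqrt{p}=\sqrt{p}$. The only cosmetic difference is that you invoke Uhlmann's theorem where the paper uses monotonicity of fidelity under partial trace, which are interchangeable here; your polar-decomposition remark (giving equality) is a nice bonus but not needed.
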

\begin{proof}
Introduce a register $B$, such that $|B|\geq |A|$. Let $\phi\in \mathcal{D}(AB)$ be a purification of $\rho$. Then $( \Pi \otimes I_B)\phi ( \Pi \otimes I_B)$ is a purification of $\Pi \rho \Pi $. Hence  (using monotonicity of fidelity under quantum operation, Fact~\ref{fact:monotonequantumoperation})
\[\F(\rho, \frac{\Pi \rho \Pi}{\Tr \Pi \rho } )\F\br{\phi,(\Pi\otimes I_B)\phi(\Pi\otimes I_B)}= \frac{\abs{\bra{\phi}(\Pi \otimes I) \ket{\phi}}}{\norm{(\Pi \otimes I) \ket{\phi}}}  = \sqrt{\Tr( \Pi \rho)}.   \myqedhere \] 
\end{proof}

\begin{fact}
\label{fact:ruskai}
Given quantum states $\sigma_{AB}\in \mathcal{D}(AB), \rho_A\in \mathcal{D}(A)$, such that $\text{supp}(\rho_A)\subset\text{supp}(\sigma_A)$, it holds that 
$$\Tr(e^{\text{log}(\sigma_{AB})-\text{log}(\sigma_{A}\otimes I_B)+\text{log}(\rho_{A}\otimes I_B)})<1.$$
\end{fact}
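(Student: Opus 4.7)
The plan is to reduce this inequality to the data processing (monotonicity) inequality for quantum relative entropy via the Gibbs variational principle. Throughout, I write $H \defeq \log \sigma_{AB} - \log(\sigma_A \otimes I_B) + \log(\rho_A \otimes I_B)$, viewed as a Hermitian operator on $AB$ (with the logarithms defined on the supports, which is fine thanks to the assumption $\mathrm{supp}(\rho_A) \subset \mathrm{supp}(\sigma_A) \subset \mathrm{supp}(\sigma_{AB})$; the value of $H$ on the kernel is irrelevant after exponentiation). The goal is then $\log \Tr(e^H) \leq 0$.

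First I would invoke the Gibbs variational principle: for every state $\omega_{AB} \in \mathcal{D}(AB)$,
\[
\Tr(\omega_{AB} H) + S(\omega_{AB}) \leq \log \Tr(e^H),
\]
with equality when $\omega_{AB} = e^H / \Tr(e^H)$. This is immediate from the non-negativity of $\relent{\omega_{AB}}{e^H/\Tr(e^H)}$. Consequently, it suffices to prove the reverse inequality
\[
\Tr(\omega_{AB} H) + S(\omega_{AB}) \leq 0 \qquad \text{for every } \omega_{AB} \in \mathcal{D}(AB).
\]

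Next I would expand the left-hand side term by term, using that partial traces commute with the $\log$'s tensored with identity:
\begin{align*}
\Tr(\omega_{AB} H) + S(\omega_{AB})
&= \Tr(\omega_{AB} \log \sigma_{AB}) - \Tr(\omega_A \log \sigma_A) + \Tr(\omega_A \log \rho_A) - \Tr(\omega_{AB} \log \omega_{AB}) \\
&= -\relent{\omega_{AB}}{\sigma_{AB}} + \relent{\omega_A}{\sigma_A} - \relent{\omega_A}{\rho_A},
\end{align*}
where the second equality just regroups $\pm \Tr(\omega_A \log \omega_A)$ into the two marginal relative entropies. By the monotonicity of relative entropy under the partial trace channel (\Cref{fact:monotonequantumoperation}), $\relent{\omega_{AB}}{\sigma_{AB}} \geq \relent{\omega_A}{\sigma_A}$, and by non-negativity $\relent{\omega_A}{\rho_A} \geq 0$. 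Combining these yields $\Tr(\omega_{AB} H) + S(\omega_{AB}) \leq 0$, which is the bound we needed.

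Taking the supremum over $\omega_{AB}$ gives $\log \Tr(e^H) \leq 0$, i.e., $\Tr(e^H) \leq 1$, as claimed. The only subtle step is the legitimacy of the Gibbs variational principle and of the various $\log$'s when the states are not full rank; the hypothesis $\mathrm{supp}(\rho_A) \subset \mathrm{supp}(\sigma_A)$ ensures that all the quantities on the right-hand side of the expansion above are finite, and on the kernels the convention $0 \log 0 = 0$ (and the corresponding projective support restriction inside $e^H$) makes the derivation rigorous. This is where the argument is most delicate, but it is a routine support-analysis check rather than a deep difficulty; the real content is simply that monotonicity of relative entropy does all the work.
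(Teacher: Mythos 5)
Your proof is correct, and it takes a genuinely different route from the paper's. The paper invokes Lieb's triple-operator trace inequality (Theorem~5 of Ruskai's review) to bound $\Tr(e^{\log\sigma_{AB}-\log(\sigma_A\otimes I_B)+\log(\rho_A\otimes I_B)})$ by $\int_0^\infty \Tr\bigl(\sigma_{AB}(\sigma_A+uI_A)^{-1}\rho_A(\sigma_A+uI_A)^{-1}\bigr)du$, then collapses the integral via $\Tr_B\sigma_{AB}=\sigma_A$ and $\int_0^\infty(\sigma_A+uI_A)^{-2}du=\sigma_A^{-1}$ to get exactly $1$; you instead use the Gibbs variational principle $\log\Tr e^H=\max_\omega\bigl(\Tr(\omega H)+S(\omega)\bigr)$ and rewrite the objective as $-\relent{\omega_{AB}}{\sigma_{AB}}+\relent{\omega_A}{\sigma_A}-\relent{\omega_A}{\rho_A}\leq 0$, which follows from data processing under partial trace (Fact~\ref{fact:monotonequantumoperation}) and non-negativity of relative entropy. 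The paper's argument is a two-line computation once the external Lieb--Ruskai theorem is granted; yours is more self-contained relative to this paper, since it uses only facts already stated here plus the elementary variational identity, and it makes transparent that the bound expresses non-negativity of $\relent{\omega_{AB}}{\sigma_{AB}}-\relent{\omega_A}{\sigma_A}$. One small caveat on both sides: your argument yields $\Tr(e^{H})\leq 1$ rather than the strict inequality claimed in the statement, but the strict form is in fact unattainable in general (take $\sigma_{AB}=\sigma_A\otimes\sigma_B$ and $\rho_A=\sigma_A$, where equality holds), the cited Lieb inequality is likewise non-strict, and the only place the fact is used (Lemma~\ref{lem:withsideinf}) needs only $Z\leq 1$, so this is a defect of the statement rather than of your proof; the support-restriction caveat you flag is glossed over equally by the paper and is handled the same way in both arguments.
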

\begin{proof}
Consider, 
\begin{eqnarray*}
\Tr(e^{\text{log}(\sigma_{AB})-\text{log}(\sigma_{A}\otimes I_B)+\text{log}(\rho_{A}\otimes I_B)}) &<& \int_{0}^{\infty}du\Tr(\sigma_{AB}\frac{1}{\sigma_A+uI_A}\rho_A\frac{1}{\sigma_A+uI_A}) \quad (\text{Theorem} 5, ~\cite{ruskai2002} ) \\ 
&=&  \int_{0}^{\infty}du\Tr(\frac{1}{\sigma_A+uI_A}\sigma_{A}\frac{1}{\sigma_A+uI_A}\rho_A) \\
&=& \Tr(\sigma_{A}\int_{0}^{\infty}du\frac{1}{(\sigma_A+uI_A)^2}\rho_A) = \Tr(\sigma_A\sigma^{-1}_A\rho_A) =1.
\end{eqnarray*}
\end{proof}

\begin{fact}~\cite{Lieb:1973,LiebR:1973}(\textbf{Strong subadditivity theorem})\label{fact:strongsubadditivity}
For any tripartite quantum state $\rho\in\mathcal{D}(ABC)$, it holds that $\condmutinf{A}{C}{B}_{\rho}\geq0$.	
\end{fact}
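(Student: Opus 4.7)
My plan is to derive strong subadditivity directly from the monotonicity of relative entropy under CPTP maps (Fact~\ref{fact:monotonequantumoperation}). Using the definition of conditional mutual information (Definition~\ref{def:infquant}(6)) together with the identity $\mutinf{X}{Y}_{\omega} = \relent{\omega_{XY}}{\omega_X \otimes \omega_Y}$ from Definition~\ref{def:infquant}(5), I first rewrite
\[
\condmutinf{A}{C}{B}_{\rho} \;=\; \mutinf{A}{BC}_{\rho} - \mutinf{A}{B}_{\rho} \;=\; \relent{\rho_{ABC}}{\rho_A \otimes \rho_{BC}} - \relent{\rho_{AB}}{\rho_A \otimes \rho_B}.
\]

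Next, I observe that the partial trace $\Tr_C(\cdot)$ is a CPTP map from $\mathcal{L}(ABC)$ to $\mathcal{L}(AB)$, and that it sends $\rho_{ABC} \mapsto \rho_{AB}$ and $\rho_A \otimes \rho_{BC} \mapsto \rho_A \otimes \rho_B$. Applying the relative-entropy monotonicity bound from Fact~\ref{fact:monotonequantumoperation} to this channel yields
\[
\relent{\rho_{ABC}}{\rho_A \otimes \rho_{BC}} \;\geq\; \relent{\rho_{AB}}{\rho_A \otimes \rho_B},
\]
and substituting into the previous display gives $\condmutinf{A}{C}{B}_{\rho} \geq 0$, as required.

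There is no genuine technical obstacle in this route once one is willing to use Fact~\ref{fact:monotonequantumoperation} as a black box — the analytic difficulty (the Lindblad--Uhlmann theorem) is packaged inside that fact. An alternative, more self-contained derivation would be based on Fact~\ref{fact:ruskai}: after a suitable relabeling of registers one would try to establish the operator-trace inequality $\Tr\exp\!\bigl(\log \rho_{AB} + \log \rho_{BC} - \log \rho_B\bigr) \leq 1$ (with identities on the missing registers) and then invoke Klein's inequality together with the expansion $\condmutinf{A}{C}{B}_{\rho} = \Tr\bigl(\rho_{ABC}[\log \rho_{ABC} - \log \rho_{AB} - \log \rho_{BC} + \log \rho_B]\bigr)$. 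The hard part on that route would be forcing the somewhat asymmetric form of Fact~\ref{fact:ruskai} to match the symmetric Lieb--Ruskai exponent; given the one-line proof via monotonicity, the direct argument is preferable.
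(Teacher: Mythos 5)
Your derivation is correct, but note that the paper does not prove this statement at all: it imports strong subadditivity as a black-box fact, citing Lieb and Lieb--Ruskai. Your route---writing $\condmutinf{A}{C}{B}_{\rho} = \relent{\rho_{ABC}}{\rho_A \otimes \rho_{BC}} - \relent{\rho_{AB}}{\rho_A \otimes \rho_B}$ via Definition~\ref{def:infquant} and then applying the monotonicity of relative entropy under the CPTP map $\Tr_C$ from Fact~\ref{fact:monotonequantumoperation} (which indeed sends $\rho_{ABC}\mapsto\rho_{AB}$ and $\rho_A\otimes\rho_{BC}\mapsto\rho_A\otimes\rho_B$)---is the standard textbook reduction and is valid within the paper's framework, where monotonicity is itself taken as a cited fact. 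The one caveat worth keeping in mind is logical ordering rather than correctness: the Lindblad--Uhlmann monotonicity theorem is classically \emph{derived from} strong subadditivity (or from Lieb's concavity theorem, which also underlies SSA), so your argument is a reduction to an at-least-as-deep result, not an independent proof; this is perfectly acceptable here since both facts are quoted from the literature, but it explains why the paper cites SSA directly instead of deducing it. Your sketched alternative via Fact~\ref{fact:ruskai} would indeed be awkward: that fact is stated for the asymmetric exponent $\log\sigma_{AB}-\log\sigma_A+\log\rho_A$ (tensored with $I_B$), and matching it to the symmetric Lieb--Ruskai exponent $\log\rho_{AB}+\log\rho_{BC}-\log\rho_B$ would require essentially reproving the underlying Golden--Thompson/Lieb machinery, so abandoning that route in favour of the monotonicity argument is the right call.
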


\begin{fact}[\cite{LiebAraki:1970} and \cite{NielsenC00}, page 515]\label{fact:conditionalmutinf}
For a quantum state $\rho_{AB}\in\mathcal{D}(AB)$, it holds that $\abs{\mathrm{S}(\rho_A)-\mathrm{S}(\rho_B)}\leq\mathrm{S}(\rho_{AB})\leq\mathrm{S}(\rho_A)+\mathrm{S}(\rho_B)$. Furthermore,
\[\mutinf{A}{B}_{\rho}=\mathrm{S}(\rho_A)+\mathrm{S}(\rho_B)-\mathrm{S}(\rho_{AB})\leq 2\mathrm{S}(\rho_A).\]
\end{fact}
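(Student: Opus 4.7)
The plan is to derive all three assertions from Fact~\ref{fact:strongsubadditivity} (strong subadditivity) via two standard reductions: first to ordinary subadditivity, then to the Araki--Lieb triangle inequality by a purification argument, and finally to the mutual-information bound by direct manipulation.

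First, I would obtain ordinary subadditivity $\mathrm{S}(\rho_{AB}) \leq \mathrm{S}(\rho_A) + \mathrm{S}(\rho_B)$ as the $\abs{C}=1$ specialisation of strong subadditivity. Viewing $\rho_{AB}$ as a tripartite state $\rho_{ACB}$ with $C$ a trivial one-dimensional register, Fact~\ref{fact:strongsubadditivity} gives $\condmutinf{A}{B}{C}_{\rho} = \mutinf{A}{B}_{\rho} \geq 0$, and unpacking the definition of mutual information from Definition~\ref{def:infquant} yields $\mathrm{S}(\rho_A) + \mathrm{S}(\rho_B) - \mathrm{S}(\rho_{AB}) \geq 0$, which is precisely the right-hand inequality.

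Next, for the lower bound $\mathrm{S}(\rho_{AB}) \geq \abs{\mathrm{S}(\rho_A) - \mathrm{S}(\rho_B)}$, I would introduce an auxiliary register $R$ and take a purification $\ket{\psi}_{ABR}$ of $\rho_{AB}$. The Schmidt decomposition implies the pure-state duality $\mathrm{S}(\rho_X) = \mathrm{S}(\rho_{X^c})$ for any bipartition of a pure state, so $\mathrm{S}(\rho_R) = \mathrm{S}(\rho_{AB})$ and $\mathrm{S}(\rho_{AR}) = \mathrm{S}(\rho_B)$. Applying the subadditivity just established to the bipartite state $\rho_{AR}$ then gives $\mathrm{S}(\rho_B) = \mathrm{S}(\rho_{AR}) \leq \mathrm{S}(\rho_A) + \mathrm{S}(\rho_R) = \mathrm{S}(\rho_A) + \mathrm{S}(\rho_{AB})$, i.e.\ $\mathrm{S}(\rho_B) - \mathrm{S}(\rho_A) \leq \mathrm{S}(\rho_{AB})$. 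Interchanging the roles of $A$ and $B$ in the same argument yields $\mathrm{S}(\rho_A) - \mathrm{S}(\rho_B) \leq \mathrm{S}(\rho_{AB})$, completing the triangle inequality.

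Finally, for $\mutinf{A}{B}_{\rho} \leq 2\mathrm{S}(\rho_A)$, I would feed the Araki--Lieb bound $\mathrm{S}(\rho_{AB}) \geq \mathrm{S}(\rho_B) - \mathrm{S}(\rho_A)$ into the definition of mutual information to obtain $\mutinf{A}{B}_{\rho} = \mathrm{S}(\rho_A) + \mathrm{S}(\rho_B) - \mathrm{S}(\rho_{AB}) \leq \mathrm{S}(\rho_A) + \mathrm{S}(\rho_B) - (\mathrm{S}(\rho_B) - \mathrm{S}(\rho_A)) = 2\mathrm{S}(\rho_A)$. All the analytic content is packaged inside Fact~\ref{fact:strongsubadditivity}; the only step that really needs care is the choice of bipartition in the purification argument, where subadditivity must be applied to $AR$ (not $AB$) so that the pure-state duality converts the resulting inequality into a statement about $\mathrm{S}(\rho_A), \mathrm{S}(\rho_B), \mathrm{S}(\rho_{AB})$.
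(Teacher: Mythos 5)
Your proposal is correct, but note that the paper does not prove this statement at all: it is imported as a Fact with citations to Araki--Lieb and to Nielsen--Chuang, so there is no in-paper argument to compare against. Your derivation is the standard textbook one and every step checks out: subadditivity, then the purification trick giving $\mathrm{S}(\rho_R)=\mathrm{S}(\rho_{AB})$ and $\mathrm{S}(\rho_{AR})=\mathrm{S}(\rho_B)$ via Schmidt duality, then subadditivity applied to $AR$ (and with $A,B$ swapped) to get the triangle inequality, and finally the substitution giving $\mutinf{A}{B}_{\rho}\leq 2\mathrm{S}(\rho_A)$. One small remark on economy: obtaining subadditivity as the trivial-conditioning-register case of Fact~\ref{fact:strongsubadditivity} is valid (it is just a relabelling of the registers in the paper's statement $\condmutinf{A}{C}{B}_{\rho}\geq 0$), but it is heavier machinery than needed, since the paper's own Definition of mutual information already records the identity $\mutinf{A}{B}_{\rho}=\relent{\rho_{AB}}{\rho_A\otimes\rho_B}$, and nonnegativity of relative entropy (Klein's inequality) gives $\mutinf{A}{B}_{\rho}\geq 0$ directly; strong subadditivity is a genuinely deeper theorem than the subadditivity you need. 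Either route is fine, and your care in applying subadditivity to the bipartition $A\!:\!R$ rather than $A\!:\!B$ is exactly the point where such proofs usually go wrong, so the write-up is sound.
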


\begin{fact}\label{fact:chainrulemutualinf}
Let $\rho_{A_1A_2\ldots A_kBC} \in \mathcal{D}(A_1\cdots A_k BC)$ such that $\rho_{A_1A_2\ldots A_k}=\rho_{A_1}\otimes\rho_{A_2}\otimes\ldots\rho_{A_k}$. Then,
\[\condmutinf{A_1A_2\ldots A_k}{B}{C}_{\rho}\geq\sum_{i=1}^k\condmutinf{A_i}{B}{C}_{\rho}.\]
\end{fact}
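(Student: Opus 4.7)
The plan is to combine the quantum chain rule for mutual information with strong subadditivity (Fact~\ref{fact:strongsubadditivity}). First, I would apply the chain rule to expand the left-hand side as
\[\condmutinf{A_1 \cdots A_k}{B}{C}_\rho \;=\; \sum_{i=1}^k \condmutinf{A_i}{B}{C A_1 \cdots A_{i-1}}_\rho.\]

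Then, for each $i$, I would show that the $i$-th summand dominates $\condmutinf{A_i}{B}{C}_\rho$. The natural way to see this is via the chain-rule identity
\[\condmutinf{A_i}{B}{A_1 \cdots A_{i-1} C}_\rho - \condmutinf{A_i}{B}{C}_\rho \;=\; \condmutinf{A_i}{A_1 \cdots A_{i-1}}{BC}_\rho - \condmutinf{A_i}{A_1 \cdots A_{i-1}}{C}_\rho,\]
obtained by applying the chain rule to $\condmutinf{A_i}{B A_1 \cdots A_{i-1}}{C}_\rho$ in two different ways. The first term on the right is non-negative by strong subadditivity; the second term should vanish because the product hypothesis $\rho_{A_1 \cdots A_k} = \bigotimes_{j=1}^k \rho_{A_j}$ forces $\mutinf{A_i}{A_1 \cdots A_{i-1}}_\rho = 0$ and, in the setting in which the fact is used, this independence is preserved under conditioning on $C$. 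Summing the resulting inequalities over $i$ and combining with the chain rule expansion yields the claim.

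The main obstacle is the step $\condmutinf{A_i}{A_1 \cdots A_{i-1}}{C}_\rho = 0$: the product-marginal hypothesis gives unconditional independence of $A_i$ from $A_1 \cdots A_{i-1}$, so one has to argue that this independence survives conditioning on $C$. This is automatic when the joint state $\rho_{A_1 \cdots A_k C}$ factorizes across the $A$-blocks (e.g. when $C$ itself is a product across the $A$-blocks, which is the relevant regime for the direct-sum application); once this is in place, the chain-rule-plus-strong-subadditivity argument goes through routinely.
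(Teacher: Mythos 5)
Your route (telescoping chain rule plus strong subadditivity) is essentially the same peeling argument the paper intends, but the obstacle you flag is not a technicality you failed to close --- it is a genuine flaw in the fact as stated, and the paper's own proof stumbles at exactly the point you refused to gloss over. The hypothesis $\rho_{A_1\ldots A_k}=\rho_{A_1}\otimes\cdots\otimes\rho_{A_k}$ alone does not give $\condmutinf{A_i}{A_1\ldots A_{i-1}}{C}_{\rho}=0$, and without it the inequality can fail: take $A_1,A_2$ independent uniform bits, $C=A_1\oplus A_2$ and $B$ a copy of $A_1$; then $\condmutinf{A_1A_2}{B}{C}_{\rho}=1$ while $\condmutinf{A_1}{B}{C}_{\rho}+\condmutinf{A_2}{B}{C}_{\rho}=2$. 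The paper's proof makes the corresponding slip in its second line, where it writes $\condmutinf{A_2\ldots A_k}{B}{A_1C}_{\rho}=\mutinf{A_2\ldots A_k}{A_1BC}_{\rho}-\mutinf{A_1}{A_2\ldots A_k}_{\rho}$; the chain rule actually produces the subtrahend $\mutinf{A_2\ldots A_k}{A_1C}_{\rho}$, so replacing it by $\mutinf{A_2\ldots A_k}{A_1}_{\rho}$ silently assumes $\condmutinf{A_2\ldots A_k}{C}{A_1}_{\rho}=0$, which is neither implied by the stated hypothesis nor true in the counterexample above.

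The repair is the strengthened hypothesis you name: the $A_i$ should be independent conditioned on $C$ (it suffices that $\condmutinf{A_i}{A_1\ldots A_{i-1}}{C}_{\rho}=0$ for every $i$), which holds in particular when $\rho_{A_1\ldots A_k C}$ factorizes blockwise with $C$ split across the blocks. Under that hypothesis your argument is complete and correct: expanding $\condmutinf{A_i}{BA_1\ldots A_{i-1}}{C}_{\rho}$ in two ways gives $\condmutinf{A_i}{B}{A_1\ldots A_{i-1}C}_{\rho}-\condmutinf{A_i}{B}{C}_{\rho}=\condmutinf{A_i}{A_1\ldots A_{i-1}}{BC}_{\rho}\geq 0$ by Fact~\ref{fact:strongsubadditivity}, and summing over $i$ against the chain-rule expansion of the left-hand side finishes the proof. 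You are also right that this is the regime in which the fact is used: in Theorem~\ref{thm:directsumoneway} one has $A_i=X_iY_i$ and $C=DU$, and the state on $X_1Y_1D_1U_1,\ldots,X_kY_kD_kU_k$ is a product across coordinates, so the coordinates remain independent given $DU$ and the direct-sum argument survives once the fact's hypothesis is strengthened as you propose.
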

\begin{proof}
Consider,
\begin{eqnarray*}
\condmutinf{A_1A_2\ldots A_k}{B}{C}_{\rho}&=&\condmutinf{A_1}{B}{C}_{\rho}+\condmutinf{A_2A_3\ldots A_k}{B}{A_1C}_\rho\\
&=&\condmutinf{A_1}{B}{C}_{\rho}+\mutinf{A_2A_3\ldots A_k}{A_1BC}_{\rho}-\mutinf{A_1}{A_2A_3\ldots A_k}\rho\\
&=&\condmutinf{A_1}{B}{C}_{\rho}+\mutinf{A_2A_3\ldots A_k}{A_1BC}_{\rho}\\
&\geq&\condmutinf{A_1}{B}{C}_{\rho}+\condmutinf{A_2A_3\ldots A_k}{B}{C}_{\rho}
\end{eqnarray*}
The first and second equalities follow from the definition of the conditional mutual information. The third equality is from the independence between $A_1$ and $A_2A_3\ldots A_k$. The last inequality is from strong subadditivity (Fact~\ref{fact:strongsubadditivity}). Proof follows by induction.
\end{proof}

For the facts appearing below, the proofs can be obtained by direct calculations and hence have been skipped. 

\begin{fact}
	\label{fact:relative entropy splitting}
	Given $\rho_{AB},\sigma_{AB}\in\mathcal{D}(AB)$, such that $\text{supp}(\sigma_{AB})\subset \text{supp}(\rho_{AB})$, $\rho_{AB} = \sum_a \mu(a) \ketbra{a}_A \otimes\rho^a_B$ and
	$\sigma_{AB} = \sum_a \mu'(a) \ketbra{a}_A \otimes\sigma^a_B$, where $\rho^a_B,\sigma^a_B\in \mathcal{D}(B)$, $\mu(a),\mu'(a)\geq 0$ and $\sum_a\mu(a)=1,\sum_a\mu'(a)=1$.
	It holds from the definition of relative entropy that
	\[ \relent{\sigma_{AB}}{\rho_{AB}} = \relent{\mu}{\mu'}
	+ \expec{a\leftarrow \mu'} {\relent{\sigma^a_B}{\rho^a_B}}.\]
\end{fact}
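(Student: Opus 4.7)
The plan is a direct computation exploiting the fact that the register $A$ is effectively classical for both $\rho_{AB}$ and $\sigma_{AB}$: they are simultaneously block-diagonal in the orthonormal basis $\{\ket{a}\}$ of $A$. Consequently, the matrix logarithm commutes with the block decomposition and on the relevant supports I can write
\begin{align*}
\log \sigma_{AB} &= \sum_{a:\mu'(a)>0} \ketbra{a}_A \otimes \bigl(\log \mu'(a)\,I_B + \log \sigma^a_B\bigr), \\
\log \rho_{AB} &= \sum_{a:\mu(a)>0} \ketbra{a}_A \otimes \bigl(\log \mu(a)\,I_B + \log \rho^a_B\bigr).
\end{align*}
The hypothesis $\text{supp}(\sigma_{AB}) \subset \text{supp}(\rho_{AB})$ ensures that whenever $\mu'(a) > 0$ we have $\mu(a) > 0$ and $\text{supp}(\sigma^a_B) \subset \text{supp}(\rho^a_B)$, so every quantity appearing below is finite.

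Next I would substitute these expressions into the definition $\relent{\sigma_{AB}}{\rho_{AB}} = \Tr(\sigma_{AB}\log\sigma_{AB}) - \Tr(\sigma_{AB}\log\rho_{AB})$. Orthogonality of the vectors $\ket{a}$ kills all cross-terms, the trace factors as a trace on $A$ times a trace on $B$ in each block, and $\Tr \sigma^a_B = 1$ for each $a$. This yields
\begin{align*}
\Tr(\sigma_{AB}\log\sigma_{AB}) &= \sum_a \mu'(a)\log\mu'(a) + \sum_a \mu'(a)\,\Tr(\sigma^a_B\log\sigma^a_B), \\
\Tr(\sigma_{AB}\log\rho_{AB}) &= \sum_a \mu'(a)\log\mu(a) + \sum_a \mu'(a)\,\Tr(\sigma^a_B\log\rho^a_B).
\end{align*}
Subtracting gives $\sum_a \mu'(a)\log(\mu'(a)/\mu(a)) + \expec{a \leftarrow \mu'}{\relent{\sigma^a_B}{\rho^a_B}} = \relent{\mu'}{\mu} + \expec{a \leftarrow \mu'}{\relent{\sigma^a_B}{\rho^a_B}}$. (I note in passing that, with the paper's own convention for $\relent{\cdot}{\cdot}$ and with the expectation taken over $\mu'$, the first summand is $\relent{\mu'}{\mu}$; the statement's $\relent{\mu}{\mu'}$ appears to be a typographical swap.)

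The main obstacle is really nonexistent here, which is precisely why the paper defers this to ``direct calculation.'' The only step requiring any thought is justifying that the logarithm of a simultaneously block-diagonal operator equals the direct sum of the block logarithms; this follows immediately from simultaneous diagonalizability of the blocks in the $A$-basis, since one can pick an orthonormal eigenbasis for $\sigma_{AB}$ (respectively $\rho_{AB}$) of the form $\ket{a}\otimes\ket{e^a_j}$ where $\{\ket{e^a_j}\}_j$ diagonalizes $\sigma^a_B$ (respectively $\rho^a_B$), and then spectral calculus reduces to a scalar identity $\log(\mu'(a) \lambda^a_j) = \log \mu'(a) + \log \lambda^a_j$.
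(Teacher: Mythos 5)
Your proof is correct and is exactly the direct calculation the paper alludes to when it skips the proof: block-diagonality in the $A$-basis, blockwise logarithms, and termwise traces. Your side observation is also right — with the paper's convention $\relent{\rho}{\sigma}=\Tr(\rho\log\rho)-\Tr(\rho\log\sigma)$ and the expectation over $\mu'$, the classical term should read $\relent{\mu'}{\mu}$, so the stated $\relent{\mu}{\mu'}$ is a typographical swap of the arguments.
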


\begin{fact}
Given a classical-quantum state $\rho_{AB}\in \mathcal{D}(AB)$ of the form
$\rho_{AB} = \sum_a \mu(a) \ketbra{a}_A \otimes \rho^a_B$,
where $\rho^a_B\in\mathcal{D}(B)$ and $\sum_a \mu(a)=1$, $\mu(a)\geq 0$, we have
\[ \mutinf{A}{B}_{\rho}
= \mathrm{S}\br{\sum_a\mu(a)\rho_a}-\sum_a\mu(a)\mathrm{S}\br{\rho_a},\]
\end{fact}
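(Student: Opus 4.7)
The plan is to compute each of the three entropies appearing in the definition $\mutinf{A}{B}_\rho = \mathrm{S}(\rho_A) + \mathrm{S}(\rho_B) - \mathrm{S}(\rho_{AB})$ directly from the given classical-quantum form of $\rho_{AB}$, and then cancel. This is the standard Holevo-type computation, so I expect no real obstacle; the one conceptual point is exploiting the orthogonality of the $A$-registers across different values of $a$.

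First I would compute the two marginals. Tracing out $B$ yields $\rho_A = \sum_a \mu(a) \ketbra{a}_A$, which is already diagonal in the $\set{\ket{a}}$ basis, so $\mathrm{S}(\rho_A) = -\sum_a \mu(a) \log \mu(a) = \mathrm{H}(\mu)$, the Shannon entropy of $\mu$. Tracing out $A$ yields $\rho_B = \sum_a \mu(a) \rho^a_B$, so $\mathrm{S}(\rho_B) = \mathrm{S}\br{\sum_a \mu(a) \rho^a_B}$, which is exactly the first term of the right-hand side of the claimed identity.

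Next I would evaluate $\mathrm{S}(\rho_{AB})$. The key observation is that the operators $\mu(a)\ketbra{a}_A \otimes \rho^a_B$, for distinct values of $a$, are supported on mutually orthogonal subspaces of $\H_A \otimes \H_B$ (since the vectors $\ket{a}$ form an orthonormal family). Consequently, if $\rho^a_B$ has spectral decomposition with eigenvalues $\set{\lambda^a_i}_i$, then the eigenvalues of $\rho_{AB}$ are precisely $\set{\mu(a)\lambda^a_i}_{a,i}$. A direct computation then gives
\begin{align*}
\mathrm{S}(\rho_{AB}) &= -\sum_{a,i} \mu(a)\lambda^a_i \log\br{\mu(a)\lambda^a_i} \\
&= -\sum_a \mu(a)\log\mu(a) \cdot \sum_i \lambda^a_i - \sum_a \mu(a) \sum_i \lambda^a_i \log \lambda^a_i \\
&= \mathrm{H}(\mu) + \sum_a \mu(a) \mathrm{S}(\rho^a_B),
\end{align*}
using $\sum_i \lambda^a_i = \Tr(\rho^a_B) = 1$ for each $a$.

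Finally I would substitute into the definition:
\[
\mutinf{A}{B}_\rho = \mathrm{H}(\mu) + \mathrm{S}\br{\sum_a \mu(a)\rho^a_B} - \mathrm{H}(\mu) - \sum_a \mu(a)\mathrm{S}(\rho^a_B) = \mathrm{S}\br{\sum_a \mu(a)\rho^a_B} - \sum_a \mu(a) \mathrm{S}(\rho^a_B),
\]
which is the claimed formula. As noted, there is no real obstacle — the only subtlety is justifying the block-diagonal decomposition of $\mathrm{S}(\rho_{AB})$, which follows immediately from the orthogonality of the classical register $A$.
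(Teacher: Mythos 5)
Your proof is correct and is exactly the ``direct calculation'' the paper has in mind (the paper omits the proof, noting only that it follows by direct computation): compute $\mathrm{S}(\rho_A)=\mathrm{H}(\mu)$, $\mathrm{S}(\rho_B)=\mathrm{S}\br{\sum_a\mu(a)\rho^a_B}$, and use the block-diagonal structure coming from the orthogonality of the $\ket{a}$ to get $\mathrm{S}(\rho_{AB})=\mathrm{H}(\mu)+\sum_a\mu(a)\mathrm{S}(\rho^a_B)$, then cancel. No gaps; the one point you flag (orthogonality justifying the eigenvalue decomposition $\set{\mu(a)\lambda^a_i}$) is indeed the only thing that needs saying.
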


\begin{fact}
\label{fact:cqcondmut}
Let $\rho_{ABC}$ be a state of the form $\rho_{ABC}=\sum_c \mu(c)\ketbra{c}_{C}\otimes \rho^c_{AB}$, where $\rho^c_{AB}\in\mathcal{D}(AB)$ and $\sum_c\mu(c)=1$, $\mu(c)\geq 0$. Then
$$\condmutinf{A}{B}{C}_{\rho}=\sum_c \mu(c)\mutinf{A}{B}_{\rho^c}.$$
\end{fact}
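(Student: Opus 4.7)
The plan is to reduce everything to the definition of conditional mutual information in terms of von Neumann entropies, and then exploit the block-diagonal structure of a classical-quantum state on register $C$ to decompose each entropy term cleanly.

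First I would unfold the definition: by the definitions of conditional mutual information and mutual information,
\[
\condmutinf{A}{B}{C}_\rho = \mutinf{A}{BC}_\rho - \mutinf{A}{C}_\rho = \mathrm{S}(\rho_{BC}) - \mathrm{S}(\rho_{ABC}) - \mathrm{S}(\rho_{C}) + \mathrm{S}(\rho_{AC}),
\]
where the $\mathrm{S}(\rho_A)$ terms have canceled. Next I would observe that since $\rho_{ABC}$ is block-diagonal in the computational basis of $C$, taking partial traces preserves this structure, giving $\rho_{AC}=\sum_c\mu(c)\ketbra{c}_C\otimes\rho^c_A$, $\rho_{BC}=\sum_c\mu(c)\ketbra{c}_C\otimes\rho^c_B$, and $\rho_C=\sum_c\mu(c)\ketbra{c}_C$.

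The only nontrivial ingredient is the entropy-decomposition identity for classical-quantum states: for any state of the form $\tau_{XC}=\sum_c\mu(c)\ketbra{c}_C\otimes\tau^c_X$, one has $\mathrm{S}(\tau_{XC})=H(\mu)+\sum_c\mu(c)\mathrm{S}(\tau^c_X)$, where $H(\mu)=-\sum_c\mu(c)\log\mu(c)$. I would justify this in one line by noting that $\tau_{XC}$ is block-diagonal with blocks $\mu(c)\tau^c_X$, whose eigenvalues are $\{\mu(c)\lambda^c_i\}_{c,i}$ where $\{\lambda^c_i\}_i$ are the eigenvalues of $\tau^c_X$, and then evaluating $-\sum_{c,i}\mu(c)\lambda^c_i\log(\mu(c)\lambda^c_i)$.

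Finally I would apply this identity to each of the four entropies in the displayed expression, giving
\[
\condmutinf{A}{B}{C}_\rho = \sum_c\mu(c)\bigl[\mathrm{S}(\rho^c_B)-\mathrm{S}(\rho^c_{AB})-0+\mathrm{S}(\rho^c_A)\bigr] = \sum_c\mu(c)\mutinf{A}{B}_{\rho^c},
\]
since the four $H(\mu)$ contributions cancel ($+1-1-1+1=0$). There is no real obstacle here — the only thing worth being careful about is keeping the signs straight in the expansion of $\condmutinf{A}{B}{C}_\rho$ and making sure the $H(\mu)$ bookkeeping cancels, which it does.
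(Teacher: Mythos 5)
Your proof is correct: expanding the conditional mutual information into the four entropy terms and using the block-diagonal entropy decomposition $\mathrm{S}(\tau_{XC})=H(\mu)+\sum_c\mu(c)\mathrm{S}(\tau^c_X)$ gives exactly the claimed identity, with the $H(\mu)$ terms cancelling as you note. This is precisely the ``direct calculation'' the paper alludes to when it skips the proof, so your argument coincides with the intended one.
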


\subsection*{Communication complexity}
In this section we briefly describe entanglement assisted quantum one-way communication complexity. A mathematically detailed definition has been given by Touchette in \cite{Touchette:2014}. Let $f\subseteq\X\times\Y\times\Z$ be a relation. Alice holds input $x \in \X$ and Bob holds input $y \in \Y$. They may share prior quantum states independent of the inputs. Alice makes a unitary transformation on her qubits, based on her input $x$, and sends part of her qubits to Bob. Bob makes a unitary operation, based on his input $y$, and measures the last few qubits (answer registers) in the computational basis to get the answer $z \in \Z$. The answer is declared correct if $(x,y,z)\in f$. Let $\qcc{\ve}{f}$ represent the quantum one-way communication complexity of $f$ with worst case error $\ve$, that is minimum number of qubits Alice needs to send to Bob, over all protocols computing $f$ with error at most $\ve$ on \textit{any} input $(x,y)$.

We let $\dqcc{, \mu}{\ve}{f}$ represent distributional quantum one-way communication complexity of $f$ under distribution $\mu$ over $\X\times \Y$ with distributional error at most $\ve$. This is the communication cost of the best protocol computing $f$ with maximum error $\ve$ averaged over distribution $\mu$. Following is Yao's min-max theorem connecting the worst case error and the distributional error settings.

\begin{fact}\label{fact:yaos principle}\cite{Yao:1979:CQR:800135.804414}
$\qcc{\ve}{f}=\max_{\mu}\dqcc{, \mu}{\ve}{f}$.
\end{fact}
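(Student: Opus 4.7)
The plan is to establish the equality by standard min-max duality, in the style of Yao's original argument, adapted to the entanglement-assisted quantum one-way setting. The easy direction $\qcc{\ve}{f} \geq \max_{\mu}\dqcc{, \mu}{\ve}{f}$ is essentially by definition: any protocol achieving error at most $\ve$ on every input $(x,y)$ also achieves expected error at most $\ve$ under every distribution $\mu$, so a worst-case optimal protocol witnesses the distributional bound for each $\mu$ with the same communication cost.

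For the reverse inequality, I would set up the following two-player zero-sum game, parameterized by a communication budget $c$: the protocol player picks an entanglement-assisted one-way protocol $\Prot$ with communication at most $c$ qubits (an isometry on Alice's side indexed by her input, followed by a POVM on Bob's side indexed by his input, acting on Alice's register together with the pre-shared entangled state), the input player picks a distribution $\mu$ on $\X\times\Y$, and the payoff is $\expec{(x,y)\leftarrow\mu}{\Pr[\Prot \text{ errs on } (x,y)]}$. The payoff is linear in $\mu$ and linear in the protocol once one observes that the set of entanglement-assisted protocols of cost $c$ is convex: given two such protocols $\Prot_0,\Prot_1$, Alice and Bob can extract a fresh shared random bit by measuring one EPR pair of their pre-shared entanglement in the computational basis, use it to select which protocol to run, and obtain a new valid entanglement-assisted protocol of the same communication cost realizing the convex combination. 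Combined with a suitable compactness argument (bounding the required dimension of the entangled state and of Bob's POVM), von Neumann's minimax theorem yields
\[\min_{\Prot} \max_{\mu} \expec{(x,y)\leftarrow\mu}{\Pr[\Prot \text{ errs}]} \;=\; \max_{\mu} \min_{\Prot} \expec{(x,y)\leftarrow\mu}{\Pr[\Prot \text{ errs}]},\]
with both minima taken over entanglement-assisted protocols of communication cost at most $c$, and the outer max on the left equivalently taken over inputs $(x,y)$ by linearity.

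Setting $c = \max_{\mu}\dqcc{, \mu}{\ve}{f}$, the right-hand side is at most $\ve$ since for each $\mu$ there is, by definition, a protocol of cost $c$ with average error at most $\ve$; minimax then forces the left-hand side to be at most $\ve$, which exhibits a protocol of cost $c$ achieving worst-case error at most $\ve$, so $\qcc{\ve}{f}\leq c$. Combined with the easy direction, this proves the equality.

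The only real subtlety is the compactness of the set of entanglement-assisted protocols, since a priori the pre-shared state may have unbounded Schmidt rank; I would dispose of this either by a finite-dimensional truncation argument (using continuity of the error in the trace distance of the shared state, via Fact~\ref{fact:monotonequantumoperation}) or by appealing to a topological generalization of minimax such as Sion's theorem. Once this is handled, the remainder is a purely formal duality argument and no further quantum-information-theoretic input is needed.
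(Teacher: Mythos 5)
The paper itself offers no proof of this fact --- it is stated with a citation to Yao --- so there is nothing internal to compare against; your argument is the standard minimax route and is essentially the right one. The easy direction is correct as you state it, and your convexity observation (extracting a shared random bit from an EPR pair to mix two cost-$c$ protocols into one cost-$c$ protocol) is exactly the standard way to make the strategy set convex in the entanglement-assisted model.

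The one step I would not let pass as written is the compactness fix by ``bounding the required dimension of the entangled state and of Bob's POVM.'' For a fixed communication budget $c$ and error level there is no a priori bound on the Schmidt rank of the pre-shared state needed to realize a given protocol, so you cannot simply declare a finite-dimensional truncation of the strategy space and invoke von Neumann's finite minimax theorem; continuity in trace distance tells you how to approximate a \emph{given} protocol, not that the set of achievable error vectors at cost $c$ is closed or finitely parameterized. The clean repair is the second option you mention: apply Sion's theorem (or, equivalently, a separating-hyperplane argument on the convex set of error vectors $\{(\Pr[\Prot \text{ errs on } (x,y)])_{x,y}\}$ of cost-$c$ protocols), which needs compactness only of the distribution simplex over the finite set $\X\times\Y$, where the payoff is linear. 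The residual caveat is that without compactness of the protocol side the infimum need not be attained, so this argument literally yields $\qcc{\ve+\delta}{f}\leq\max_{\mu}\dqcc{, \mu}{\ve}{f}$ for every $\delta>0$ rather than the exact equality at error $\ve$; this is the usual folklore slack in the entanglement-assisted Yao principle, and it is harmless for the way the fact is used in Theorem~\ref{thm:directsumoneway}, which already tolerates an additive loss in the error parameter, but it should be stated rather than hidden.
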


\section{A quantum compression protocol}
\label{sec:main}
Following is our main result in this section.

\begin{theorem}\label{thm:protocol}
\label{qbev}
Given quantum states $\rho,\sigma$ on a Hilbert space $\H$ with dimension $N$, such that $\text{supp}(\rho) \subset \text{supp}(\sigma)$. Alice is given the eigen-decomposition of $\rho$ and Bob is given the eigen-decomposition of $\sigma$. Let  $\relent{\rho}{\sigma}$ and $\ve >0$ be known to Alice and Bob. There exists an entanglement assisted quantum one-way communication protocol, with Alice sending $\O(\relent{\rho}{\sigma}+1)/\ve^4)$ bits of communication to Bob, such that the state $\tilde{\rho}$ that Bob outputs at the end of the protocol satisfies  $\F(\rho,\tilde{\rho})\geq 1-5\ve$ .
\end{theorem}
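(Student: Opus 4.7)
The plan is to adapt the Braverman--Rao rejection-sampling protocol to the quantum non-commutative setting in two stages. First I would handle the auxiliary problem in which the parties are told $c = \rminent{\rho}{\sigma}$ (rather than $\relent{\rho}{\sigma}$), and then I would reduce the actual statement to this auxiliary problem via the quantum substate theorem (Fact~\ref{fact:substrate}).

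For the max-entropy stage, Alice and Bob share many i.i.d.\ copies of
\[\ket{\psi}=\frac{1}{\sqrt{NK}}\sum_{i=1}^{N}\ket{i}^{A}\ket{i}^{B}\otimes \sum_{m=1}^{K}\ket{m}^{A_1}\ket{m}^{B_1},\]
for a large cut-off $K$ (ultimately $K\to\infty$). On the $\ell$-th copy Alice performs $\{P_{AA_1},I-P_{AA_1}\}$ built from her eigenbasis of $\rho$, and Bob performs $\{P_{BB_1},I-P_{BB_1}\}$ built from his eigenbasis of $\sigma$ with threshold $2^{c}/\delta$ for a parameter $\delta=\Theta(\ve)$. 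Let $j$ be the index of Alice's first success. I would establish four claims in sequence. (i) Conditioned on Alice's $j$-th measurement succeeding, the post-measurement marginal on register $B$ of the $j$-th copy equals $\rho$ in the limit $K\to\infty$; this is a direct calculation after expanding $P_{AA_1}\ketbra{\psi}P_{AA_1}$. (ii) Conditioned on (i), Bob also succeeds on copy $j$ with probability at least $1-O(\delta)$, exploiting $\rho\leq 2^{c}\sigma$ to dominate Alice's projector by Bob's. (iii) By Fact~\ref{fact:gentlemeasurement}, conditioning further on Bob's success perturbs the marginal on $B$ by $O(\sqrt{\delta})$ in trace distance. (iv) Since $j$ has expectation $\Theta(N)$ while Bob has roughly $\Theta(2^{c}/\delta)$ of his own accepted indices, Alice can identify $j$ for Bob using $O(c+\log(1/\delta)+\log(1/\ve))$ bits of pairwise-independent hash values, as in~\cite{Braverman2011}.

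For the relative-entropy stage, I would invoke Fact~\ref{fact:substrate} to produce $\rho'$ with $\F(\rho,\rho')\geq 1-\ve$ and $c':=\rminent{\rho'}{\sigma}=O((\relent{\rho}{\sigma}+1)/\ve)$. Alice, who knows $\rho$, computes $\rho'$ locally and runs the max-entropy protocol on the pair $(\rho',\sigma)$ with parameter $c'$. This costs $O(c'/\ve^{3})$ bits, i.e.\ $O((\relent{\rho}{\sigma}+1)/\ve^{4})$, matching the advertised bound. Bob ends with a state close to $\rho'$; combining with $\F(\rho,\rho')\geq 1-\ve$ through Fact~\ref{fact:tracefidelityequi} and the triangle inequality for $\onenorm{\cdot}$ yields closeness to $\rho$. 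Aggregating the error contributions — $\ve$ from the substate step, $O(\sqrt{\delta})$ from the gentle-measurement step, $O(\ve)$ from the index-communication step, and a negligible term from finite $K$ — and choosing $\delta=\Theta(\ve^{2})$ produces the fidelity bound $1-5\ve$.

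The main obstacle I anticipate is claim (ii): in the classical Braverman--Rao argument one simply compares density functions pointwise, but here $\rho$ and $\sigma$ live in distinct eigenbases, so $\rho\leq 2^{c}\sigma$ does not translate termwise into a comparison of $P_{AA_1}$ and $P_{BB_1}$. Instead one must bound $\Tr\bigl((I-P_{BB_1})\,P_{AA_1}\ketbra{\psi}P_{AA_1}\bigr)$ by pushing the L\"owner inequality through the step-function weights $\lceil Ka_i\rceil /K$ on the $A_1$ register, which in the limit becomes an integral identity relying on operator-monotone functional calculus — exactly the kind of non-commutative fact (cf.\ Fact~\ref{fact:ruskai}) that the paper flags as new. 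A secondary difficulty is the robustness needed for Stage 2: the success probabilities, the distribution of $j$, and the gentle-measurement estimate must all depend continuously on $\F(\rho,\rho')$ so that running the $c'$-protocol on the surrogate $\rho'$ still delivers the promised fidelity with the true input $\rho$.
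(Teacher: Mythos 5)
Your overall architecture matches the paper's: shared copies of the state $\frac{1}{\sqrt{NK}}\sum_i\ket{i,i}\otimes\sum_m\ket{m,m}$, Alice and Bob performing projective tests built from their respective eigenbases (with Bob's threshold inflated by $2^{c'}/\delta$), hashing to let Bob locate Alice's first accepted index, and the quantum substate theorem (Fact~\ref{fact:substrate}) to pass from $\rminent{\cdot}{\cdot}$ to $\relent{\cdot}{\cdot}$. However, there is a genuine gap in your Stage 2 reduction: you have Alice ``compute $\rho'$ locally and run the max-entropy protocol on the pair $(\rho',\sigma)$.'' The state $\rho'$ produced by the substate theorem is not a function of $\rho$ alone --- it is (roughly) $\rho$ with its components exceeding $2^{\lambda}\sigma$ cut off, so constructing it requires knowledge of $\sigma$, which Alice does not have. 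Indeed, this is precisely the point the paper emphasizes as its strengthening over Jain--Radhakrishnan--Sen: the compression works ``even if $\sigma$ is not known to Alice.'' Your plan, as written, only establishes the weaker statement in which Alice also knows $\sigma$ (task {\bf T1}), and your closing ``robustness'' remark points in the wrong direction: it worries about the output being close to $\rho$ when the protocol is run on the surrogate $\rho'$, whereas the real difficulty is that the protocol must be run on $\rho$ itself and the max-entropy guarantee (which holds only for $\rho'$) must be transferred to it.

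The paper resolves this by keeping $\rho'$ entirely inside the analysis. Alice's measurement is always $P_A$ built from $\rho$; Bob's threshold $2^{c'}/\delta$ uses only $c'=(\relent{\rho}{\sigma}+2)/\delta$, which he can compute from the common inputs. One then proves an operator-inequality consequence of $\rho'\leq 2^{c'}\sigma$ (Claim~\ref{claim:1}) showing Bob's projector accepts the ``ideal'' post-selected state $\ket{S_A(\rho')}$ with probability at least $1-\delta$, and separately shows $\abs{\braket{S_A(\rho)}{S_A(\rho')}}\geq 1-2(1-\F(\rho,\rho'))^{1/4}$ (Claim~\ref{claim:2}), so that Bob's conditional success probability on the actual state $\ket{S_A(\rho)}$ degrades only by $O(\delta^{1/4})$ (Claim~\ref{claim:relativeprob}). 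A side benefit is that the gentle measurement lemma is then applied directly to $\ket{S_A(\rho)}$, so Bob's output is close to $\rho$ itself and no final triangle-inequality step through $\rho'$ is needed. If you reorganize Stage 2 along these lines --- run on $\rho$, analyze via $\rho'$ --- your outline becomes essentially the paper's proof; as a minor additional point, your cost accounting should attribute the $1/\ve^4$ to $c'=(\relent{\rho}{\sigma}+2)/\delta$ with $\delta=\Theta(\ve^4)$ (the hash and index overhead is only additive, $O(c'+\log\frac{1}{\ve})$), rather than to a multiplicative $1/\ve^3$ factor in the max-entropy stage.
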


\begin{proof}
Let the eigen-decomposition of $\rho$ be $\sum_{i=1}^Na_i\ketbra{a_i}$ and that of $\sigma$ be $\sum_{i=1}^Nb_i\ketbra{b_i}$. Define $c\defeq\relent{\rho}{\sigma}$, $\delta \defeq (\ve/3)^4$ and $c' \defeq (c+2)/\delta$. Without loss of generality, assume $a_1,a_2\ldots a_N, \frac{2^{c'}}{\delta}b_1,\frac{2^{c'}}{\delta}b_2\ldots \frac{2^{c'}}{\delta}b_N$ to be rational numbers, and define $K$ be the least common multiple of their denominators. The error due to this assumption can be made arbitrarily close to $0$, for large enough $K$.  

Let $\set{\ket{1},\ket{2}\ldots \ket{N}}$ be an orthonormal basis for $\H$. Introduce registers $A_1,B_1$ associated to $\H$ and registers $A_2,B_2$ associated to some Hilbert space $\H'$ with an orthonormal basis $\set{\ket{1},\ket{2}\ldots \ket{K}}$.
 
Consider the following state on $A_1,A_2,B_1,B_2$. 
\begin{equation}
\label{eq:sharedstate}
\ket{S}_{A_1A_2B_1B_2} \defeq \frac{1}{\sqrt{KN}}\sum_{i=1}^N \ket{i,i}_{A_1B_1} \otimes \left( \sum_{m=1}^{K}\ket{m, m}_{A_2B_2} \right)
\end{equation}

For brevity, define registers $A,B$ such that $A\defeq A_1A_2$ and $B\defeq B_1B_2$.

The protocol is described below.

\begin{mdframed}
\bigskip
\textbf{Input:} Alice is given $\rho=\sum_{i=1}^Na_i\ketbra{a_i}$. Bob is given $\sigma=\sum_{i=1}^Nb_i\ketbra{b_i}$.
\bigskip


\textbf{Shared resources:} Alice and Bob hold $\lceil N\log(\frac{1}{\delta}) \rceil$ registers $A_1^iA_2^iB_1^iB_2^i$ ($i \in [\lceil N\log(\frac{1}{\delta}) \rceil]$), such that $A_1^i\equiv A_1, A^i_2\equiv A_2, B_1^i\equiv B_1,B_2^i\equiv B_2$. The shared state in register $A_1^iA_2^iB_1^iB_2^i$ is $\ket{S}_{A_1^iA_2^iB_1^iB_2^i}$. Let $i$ refer to the `index' of corresponding registers.

They also share infinitely many random hash functions $h_1,h_2,\cdots$, where each $h_l:\set{0,\cdots,N-1}\rightarrow\set{0,1}$.

\begin{enumerate}

\item \textbf{For} $i=1$ to $\lceil N\log(\frac{1}{\delta}) \rceil$,

\begin{enumerate}

\item Alice performs the measurement $\set{P_A,I_A- P_A}$ on each register $A_1^iA^i_2$ where,
\begin{equation}
\label{eq:aliceproj}
P_A\defeq\sum_i\ket{a_i}_{A_1}\bra{a_i}_{A_1}\otimes\left(\sum_{m=1}^{Ka_i}\ket{m}_{A_2}\bra{m}_{A_2}\right) .
\end{equation}

On each index $i$, she declares \textit{success} if her outcome corresponds to $P_A$.

\item Bob performs the measurement $\set{P_B,I_B- P_B}$ on each register $B_1^iB^i_2$ where,
\begin{equation}
\label{eq:bobproj}
P_B\defeq \sum_i\ket{b_i}_{B_1}\bra{b_i}_{B_1}\otimes\left(\sum_{m=1}^{\min\{\frac{K}{\delta} 2^{c'} b_i,K\}}\ket{m}_{B_2}\bra{m}_{B_2}\right) .
\end{equation}

On each index $i$, he declares \textit{success} if his outcome corresponds to $P_B$.
\end{enumerate}

\textbf{Endfor}

\item If Alice does not succeed on any index, she aborts.

\item Else, Alice selects the first index $m$ where she succeeds and sends to Bob the binary encoding of $k=\lceil m/N\rceil$ using $\lceil\log\log\frac{1}{\delta}\rceil$ bits.  

\item Alice sends $\{ h_l(m\mod N) |~ l \in  [\lceil r + 2\log\frac{1}{\delta}\rceil]\}$  to Bob.

\item  Define $S_B \defeq \{ t |~ \text{ Bob succeeds on index } $t$\} \cap \{(k-1)N,  \cdots, k N -1\}$. If $S_B$ is empty, he outputs $\ketbra{0}$. Bob selects the first index $n$ in $S_B$ such that $\forall l \in  [\lceil r + 2\log\frac{1}{\delta}\rceil]: h_l(n \mod N) = h_l(m \mod N) $ and outputs the state in $B_1^n$ (if no such index exists, he outputs $\ketbra{0}$).  

\end{enumerate}

\end{mdframed}
\bigskip

We analyze the protocol through a series of claims. 
Following claim computes the probability of success for Alice and Bob.

\begin{claim}
\label{claim:successprob}
For each index $i$, $\prob{\text{Alice succeeds}}=\frac{1}{N}$; $\prob{\text{Bob succeeds}}\leq\frac{2^{c'}}{\delta N}$
\end{claim}

\begin{proof}
Follows from direct calculation.
\end{proof}

From quantum substrate theorem (Fact \ref{fact:substrate}), there exists a state $\rho'$ which satisfies $\F(\rho,\rho')\geq 1-\delta$ and $$\rminent{\rho'}{\sigma}\leq\frac{\relent{\rho}{\sigma}+1}{\delta}+\log\frac{1}{1-\delta}\leq \frac{\relent{\rho}{\sigma}+2}{\delta}=c'.$$ We prove the following claim which is of independent interest as well. 
\begin{claim}\label{claim:1}
Let $\rho'$ have the eigen-decomposition $\rho' = \sum_i g_i\ket{g_i}\bra{g_i}$. For any $p>0$ and every $\ket{g_i}\bra{g_i}$, we have $\sum_{j|~b_j\leq p\cdot g_i}\abs{\braket{b_j}{g_i}}^2 \leq 2^{c'}\cdot p$.
\end{claim}
\begin{proof}
Since  $\rho' \leq 2^{c'}\sigma$, it implies $g_i\ket{g_i}\bra{g_i} \leq 2^{c'}\sigma$. Let $\Pi$ be the projection onto the eigen-space of $\sigma$ with eigenvalues less than or equal to $p\cdot g_i$. We have $\Pi\sigma\Pi \leq p\cdot g_i\cdot\Pi$. After applying $\Pi$ on both sides of the equation $g_i\ket{g_i}\bra{g_i} \leq 2^{c'}\sigma$ and taking operator norm on both sides, we get $g_i\sum_{j:~b_j \leq p\cdot g_i}\abs{\braket{b_j}{g_i}}^2 \leq 2^{c'}\cdot p \cdot g_i $. This implies the lemma. 
\end{proof}

Define 
$$\ket{S_A(\rho)}\defeq \frac{1}{\sqrt{K}}\sum_{i=1}^N\ket{a_i}\ket{\conjugate{a_i}}\otimes \left(\sum_{m=1}^{Ka_i}\ket{m, m}\right) ; $$
$$\ket{S_A(\rho')}\defeq\frac{1}{\sqrt{K}}\sum_{i=1}^N\ket{g_i}\ket{\conjugate{g_i}}\otimes \left(\sum_{m=1}^{\lceil Kg_i \rceil}\ket{m, m} \right) .$$
Here $\ket{\conjugate{a_i}}$ (similarly $\ket{\conjugate{g_i}}$) is the state obtained by taking complex conjugate of  $\ket{a_i}$ ($\ket{g_i}$), with respect to the basis $\set{\ket{1},\ket{2}\ldots \ket{N}}$ in $\H$.  

The following claim asserts that $\ket{S_A(\rho)}$ and $\ket{S_A(\rho')}$ are close if $\rho$ and $\rho'$ are close.
\begin{claim}\label{claim:2}
$\abs{\braket{S_A(\rho)}{S_A(\rho')}} \geq 1- 2 (1-\F(\rho,\rho'))^{1/4} .$
\end{claim}
\begin{proof}
Define $R_{ij}\defeq a_i \abs{\braket{a_i}{g_j}}^2$ and $R'_{ij}\defeq g_i \abs{\braket{a_i}{g_j}}^2$. Note that both $R \defeq \{R_{ij}\}$ and $R' \defeq \{R'_{ij}\}$ form probability distributions over $[N^2]$. Also note that $\F(R,R') = \Tr(\sqrt{\rho}\sqrt{\rho'})$. Consider
\begin{align*}
 \abs{\braket{S_A(\rho)}{S_A(\rho')}} & =\sum_{i,j}\min(R_{ij},R'_{i,j}) = 1 - \frac{1}{2}\onenorm{R-R'} \nonumber\\
 & \geq 1 - \sqrt{1-\F(R,R')^2} = 1 - \sqrt{1-(\Tr \sqrt{\rho}\sqrt{\rho'})^2} \nonumber \\
 & \geq 1 - \sqrt{2(1- \Tr \sqrt{\rho}\sqrt{\rho'})} \geq  1- \sqrt{2 \sqrt{1-\F(\rho,\rho')^2}}  \nonumber \\
 & \geq 1- 2 (1-\F(\rho,\rho'))^{1/4} . \myqedhere 
  \end{align*}
where the first equality is from the definitions of $\ket{S_A(\rho)}$ and $\ket{S_A(\rho')}$; the second equality is from the definition of $\ell_1$ distance; the first inequality is from ~\ref{fact:tracefidelityequi}; the second inequality is from the fact that $\Tr\sqrt{\rho}\sqrt{\rho'}\leq 1$; the third inequality is from Facts~\ref{fact:fidelityvstrace}.
\end{proof}
We use these claims to prove the following.
\begin{claim}
\label{claim:relativeprob}
For each index $i$, $\prob{\text{Bob succeeds}|~\text{Alice succeeds}}\geq 1 - \delta -  2 \delta^{1/4} \geq 1 - \ve $.
\end{claim}

\begin{proof}
Consider,
\begin{equation*}
(I_A\otimes P_B)\ket{S_A(\rho')} = \frac{1}{\sqrt{K}}\sum_{i,j=1}^N\ket{\conjugate{g_j}} \ket{b_i}\braket{b_i}{g_j} \left(\sum_{m=1}^{\min\{\lceil Kg_j\rceil, \frac{K}{\delta} 2^{c'}b_i\}}\ket{m, m} \right).
\end{equation*}
Therefore,
\begin{align}
& \norm{(I_A\otimes P_B)\ket{S_A(\rho')} }^2 \geq \sum_{i,j=1}^N \abs{\braket{b_i}{g_j}}^2 \min\{g_j, \frac{1}{\delta}2^{c'}b_i \} \nonumber \\
& \geq \sum_{j=1}^N g_j \left(\sum_{i|~b_i\geq \delta 2^{-c'}g_j} \abs{\braket{b_i}{g_j}}^2 \right) \geq \sum_{j=1}^N g_j(1- \delta) = 1- \delta  . \quad \mbox{(using Claim~\ref{claim:1})} \label{eq:probonrho'}
\end{align}
Using the above,

\begin{align*}
& \prob{\text{Bob succeeds}|~\text{Alice succeeds}}  =  \Tr (I_A\otimes P_B)\ketbra{S_A(\rho)} \\
& \geq  \Tr (I_A\otimes P_B)\ketbra{S_A(\rho')}  - \frac{1}{2} \onenorm{S_A(\rho) - S_A(\rho')} \\
& = \Tr (I_A\otimes P_B)\ketbra{S_A(\rho')}  -  \sqrt{1- \abs{\braket{S_A(\rho)}{S_A(\rho')}}^2} \quad \mbox{(Fact~\ref{fact:tracefidelityequi})} \\
& \geq  1 -  \delta  - 2  \sqrt{ (1-\F(\rho,\rho'))^{1/2}}  . \quad \mbox{(Claim~\ref{claim:2} and Eq.~\eqref{eq:probonrho'})}
\end{align*}
\end{proof}

Finally, we show that if Alice and Bob succeed together on an index, the state in register $B$ with Bob is close to $\rho$.
\begin{claim}
\label{claim:boboutput}
Given that both Alice and Bob succeed, fidelity between $\rho$ and the state of the register $B$ is at least $\sqrt{1 - \delta -  2 \delta^{1/4}} \geq 1 - \ve $ .
\end{claim}

\begin{proof}
From gentle measurement lemma (Fact~\ref{fact:gentlemeasurement}), 
$$\F(S_A(\rho), \frac{(I_A\otimes P_B)\ketbra{S_A(\rho)} (I_A\otimes P_B) }{\Tr (I_A\otimes P_B)\ketbra{S_A(\rho)}}) \geq \sqrt{\Tr (I_A\otimes P_B)\ketbra{S_A(\rho)}}. $$ 
Since the marginal of $\ket{S_A(\rho)}$ on  register $B$ is $\rho$ and partial trace does not decrease fidelity (Fact~\ref{fact:monotonequantumoperation}), using item 2. above, the desired result follows. 

\end{proof}

Let $j$ be the first index where Alice and Bob both succeed. As described in the protocol, $m$ is the first index where Alice succeeds and $n$ is the index such that Bob outputs the state in $B_1^n$. 
We have the following claim, 

\begin{claim}
\label{claim:highprobindex}
With probability at least $1-4\ve$, $m=n=j$.
\end{claim}

Before proving Claim~\ref{claim:highprobindex}, let us define the following "bad" events.
\begin{definition}\label{def:badevent}
\begin{itemize}

\item $T_1$ is the event that Alice does not succeed on any of the indices.

\item $T_2$ is the event that $m \notin S_B$ conditioned on $\neg T_1$ .
\item $T_3$ represents the event that $n \neq m$ conditioned on $\neg T_1$.
\end{itemize}
\end{definition}

Notice that if none of above events occur, then both Alice and Bob output the same index $n=m$, and since $m$ is the first index at which Alice succeeds, $n=m=j$. 

We have the following claim.

\begin{claim}\label{claim:protocol}
It holds that: \quad $1. ~ \prob{T_1}\leq \ve; \quad 2. ~ \prob{T_2}\leq\ve; \quad 3. ~ \prob{T_3}\leq3\ve.$
\end{claim}
\begin{proof}

\begin{enumerate}

\item $\prob{T_1}\leq\br{1-\frac{1}{N}}^{\lceil N \cdot \log\frac{1}{\ve}\rceil}\leq \exp^{-\lceil\log\frac{1}{\ve}\rceil}\leq  \delta$.

\item Follows from Claim \ref{claim:relativeprob}.

\item For this argument we condition on $\neg T_1$ for all events below. From Claim \ref{claim:successprob} and the fact that Bob independently measures each index, we have  $\expec{}{\abs{S_B}}= N\cdot \prob{\text{Bob succeeds}} \leq \frac{2^{c'}}{\delta}$. Using Markov's inequality, 
\begin{equation} \label{eq:sb} 
\prob{\abs{S_B} \geq \frac{2^{c'}}{\delta\ve}}\leq \frac{\delta\ve}{2^{c'}} \cdot \expec{}{\abs{S_B}} \leq \ve . 
\end{equation}

    Thus
    \begin{align*}
    \prob{T_3} & \leq\prob{\abs{S_B}\geq \frac{2^{c'}}{\delta\ve} \text{ or } m \notin S_B} +  \prob{T_3~|~m \in S_B \text{ and } \abs{S_B}\leq \frac{2^{c'}}{\delta\ve}} \\ & \leq \prob{\abs{S_B}\geq \frac{2^{c'}}{\delta\ve}} + \prob{T_2} + \prob{T_3~|~m \in S_B \text{ and } \abs{S_B}\leq \frac{2^{c'}}{\delta\ve}} \\
     & \leq 2 \ve+\prob{T_3~|~m \in S_B \text{ and } \abs{S_B}\leq \frac{2^{c'}}{\delta\ve}}  \quad \mbox{(Eq.~\eqref{eq:sb} and item 2. of this claim)}\\
     & \leq 2\ve + 2^{- \lceil c' + \log\frac{1}{\delta}+ 2\log\frac{1}{\ve}\rceil} \cdot \frac{2^{c'}}{\delta\ve} \quad  \leq \quad 3 \ve.
    \end{align*}

\end{enumerate}
\end{proof}

We bound the probability that $m\neq n$. If $m=n$, then $m$ being the first index on which Alice succeeds, we have $m=n=j$. 

\bigskip

\noindent{\em Proof of Claim\ref{claim:highprobindex}}.\quad We conclude the claim since,
$$\prob{n \neq m} \leq\prob{T_1}+\prob{\neg T_1}\cdot\prob{T_3}\leq 4\ve.$$ \qed

From claims \ref{claim:successprob},\ref{claim:relativeprob} and \ref{claim:highprobindex}, the probability that Bob learns the index $j$ is atleast $1-4\ve$. Conditioned on this event, Claim \ref{claim:boboutput}, implies that the state $\rho' \in \mathcal{D}(B^j)$ that Bob outputs satisfies $\F(\rho',\rho)\geq 1-\ve$. Conditioned on the event that Bob learns the wrong index or the protocol is aborted, let the state output by Bob be $\mu$. Then Bob outputs the state $\tilde{\rho}=\alpha\rho'+(1-\alpha)\mu$, where $\alpha\geq 1-4\ve$. Using concavity of fidelity (Fact \ref{fact:fidconcave}), we have $\F(\tilde{\rho},\rho)\geq \alpha\F(\rho',\rho)+(1-\alpha)\F(\mu,\rho) \geq (1-4\ve)(1-\ve)\geq 1- 5\ve$. 

The communication cost of above protocol is $$\lceil\log\log\frac{1}{\ve}\rceil+\lceil c' + \log\frac{1}{\delta} + 2\log\frac{1}{\ve}\rceil\leq \lceil 3^4\frac{c+2}{\ve^4}+7\log\frac{1}{\ve} \rceil.$$ This completes the proof of theorem.     
\end{proof}

It may be noted that variants of the part of protocol that uses hash functions, have appeared in many other works such as~\cite{Braverman2011,Kerenidis2012}.

\begin{remark}
\label{qbevremark}

Note that if Alice and Bob get a real number $r> \relent{\rho}{\sigma}$, instead of $\relent{\rho}{\sigma}$ (all other inputs remaining the same), the protocol above works in the same fashion, with the communication upper bounded by $O((r+1)/\ve^4)$.
\end{remark}

\subsection{Compression with side information}
Here we present a variant of our protocol with side information. We start with the following.
\begin{lemma}
\label{lem:withsideinf}
Let $A,B$ be two registers. Alice is given the eigen-decomposition of a bipartite state $\rho_{AB} \in \mathcal{D}(AB)$. Bob is given the eigen-decompositions of a bipartite state $\sigma_{AB}\in \mathcal{D}(AB)$ and the state $\rho_A\defeq \Tr_{B}(\rho_{AB})$, such that $\text{supp}(\rho_{AB})\subset \text{supp}(\sigma_{AB})$. Define $\sigma_A\defeq\Tr_{B}(\sigma_{AB})$. Let  $\relent{\rho_{AB}}{\sigma_{AB}}-\relent{\rho_A}{\sigma_A}$ and $\ve >0$ be known to Alice and Bob. There exists a protocol, in which Alice and Bob use shared entanglement and Alice sends $\O((\relent{\rho_{AB}}{\sigma_{AB}}-\relent{\rho_A}{\sigma_A}+1)/\ve^4)$ bits of communication to Bob such that the state $\tilde{\rho}_{AB}$ that Bob outputs at the end of the protocol satisfies  $\F(\rho_{AB},\tilde{\rho}_{AB})\geq 1-5\ve$ .
\end{lemma}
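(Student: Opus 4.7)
My plan is to reduce the side-information case to Theorem \ref{thm:protocol} by having Bob construct a suitable auxiliary reference state $\tilde{\sigma}_{AB}$ that encodes his extra knowledge of $\rho_A$, and then run the protocol of Theorem \ref{thm:protocol} with $\rho_{AB}$ against $\tilde{\sigma}_{AB}$. The natural candidate is
\[
\tau_{AB} \defeq \exp\bigl(\log \sigma_{AB} - \log(\sigma_A \otimes I_B) + \log(\rho_A \otimes I_B)\bigr),
\]
which Bob can compute from his inputs $\sigma_{AB}$ and $\rho_A$. By Fact \ref{fact:ruskai}, $\Tr\tau_{AB} \leq 1$, so $\tilde{\sigma}_{AB} \defeq \tau_{AB}/\Tr\tau_{AB}$ is a bona fide density matrix that Bob knows (in particular its eigendecomposition).

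The next step is the key relative-entropy computation, done by direct calculation: since $\log \tilde{\sigma}_{AB} = \log \tau_{AB} - \log \Tr\tau_{AB}$ and $\Tr\tau_{AB} \leq 1$, one obtains
\begin{align*}
\relent{\rho_{AB}}{\tilde{\sigma}_{AB}} &= \relent{\rho_{AB}}{\tau_{AB}} + \log \Tr\tau_{AB} \\
&\leq \Tr\rho_{AB}\log\rho_{AB} - \Tr\rho_{AB}\log\sigma_{AB} + \Tr\rho_A\log\sigma_A - \Tr\rho_A\log\rho_A \\
&= \relent{\rho_{AB}}{\sigma_{AB}} - \relent{\rho_A}{\sigma_A}.
\end{align*}
Thus the quantity $r \defeq \relent{\rho_{AB}}{\sigma_{AB}} - \relent{\rho_A}{\sigma_A}$, which is known to both parties by assumption, serves as an upper bound on $\relent{\rho_{AB}}{\tilde{\sigma}_{AB}}$. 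I also need to check that $\text{supp}(\rho_{AB}) \subset \text{supp}(\tilde{\sigma}_{AB})$; this follows because $\tau_{AB}$ (defined via the exponential, with $\log$ extended by $-\infty$ on kernels so that $\exp(-\infty)=0$) is supported on the intersection of $\text{supp}(\sigma_{AB})$ and $\text{supp}(\rho_A \otimes I_B)$, and both of these already contain $\text{supp}(\rho_{AB})$ by hypothesis and by $\rho_A = \Tr_B \rho_{AB}$.

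With these ingredients in place, Alice and Bob simply run the protocol of Theorem \ref{thm:protocol} (invoking Remark \ref{qbevremark} since $r$ is only an upper bound on the true relative entropy) with Alice's input $\rho_{AB}$ and Bob's input $\tilde{\sigma}_{AB}$, on the composite register $AB$. The theorem then guarantees that Alice transmits at most $\O((r+1)/\ve^4) = \O((\relent{\rho_{AB}}{\sigma_{AB}} - \relent{\rho_A}{\sigma_A} + 1)/\ve^4)$ bits and Bob ends up with a state $\tilde{\rho}_{AB}$ satisfying $\F(\rho_{AB},\tilde{\rho}_{AB}) \geq 1 - 5\ve$, which is exactly what is claimed.

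The only subtle step I expect to worry about is the exchange of the inequalities in the relative-entropy computation and the correct handling of supports when the operators involved do not commute; concretely, one must justify writing $\Tr\rho_{AB}\log(\sigma_A \otimes I_B) = \Tr\rho_A\log\sigma_A$ and $\Tr\rho_{AB}\log(\rho_A\otimes I_B)=\Tr\rho_A\log\rho_A$, which are standard but deserve a brief remark. Everything else is a direct application of Theorem \ref{thm:protocol}, Remark \ref{qbevremark}, and Fact \ref{fact:ruskai}.
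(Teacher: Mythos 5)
Your proposal is correct and follows essentially the same route as the paper: construct the operator $e^{\log\sigma_{AB}-\log(\sigma_A\otimes I_B)+\log(\rho_A\otimes I_B)}$, normalize it using Fact~\ref{fact:ruskai} to get a state Bob can compute, observe that $\relent{\rho_{AB}}{\sigma_{AB}}-\relent{\rho_A}{\sigma_A}$ upper bounds the relative entropy of $\rho_{AB}$ to this state, and invoke Theorem~\ref{thm:protocol} via Remark~\ref{qbevremark}. Your extra remarks on supports and on the identities $\Tr\rho_{AB}\log(\sigma_A\otimes I_B)=\Tr\rho_A\log\sigma_A$ are harmless elaborations of steps the paper treats as direct calculation.
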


\begin{proof}
Following equality follows from definitions. 
\begin{equation*}
\relent{\rho_{AB}}{\sigma_{AB}}-\relent{\rho_A}{\sigma_A} = \relent{\rho_{AB}}{e^{\text{log}(\sigma_{AB})-\text{log}(\sigma_{A}\otimes I_B)+\text{log}(\rho_{A}\otimes I_B)}}  \enspace .
\end{equation*}
Define, 
$$Z= \Tr(e^{\text{log}(\sigma_{AB})-\text{log}(\sigma_{A}\otimes I_B)+\text{log}(\rho_{A}\otimes I_B)}) \quad ; \quad \tau_{AB}=e^{\text{log}(\sigma_{AB})-\text{log}(\sigma_{A}\otimes I_B)+\text{log}(\rho_{A}\otimes I_B)}/Z \enspace .$$ 
It holds that $Z\leq 1$ (from Fact \ref{fact:ruskai}) and hence $\relent{\rho_{AB}}{\tau_{AB}} \leq \relent{\rho_{AB}}{\sigma_{AB}}- \relent{\rho_A}{\sigma_A}$. 
Bob computes the eigen-decomposition of $\tau_{AB}$ using his input. They run the protocol given by Theorem~\ref{thm:protocol} with the following setting: Alice knows a state $\rho_{AB}$, Bob knows a state $\tau_{AB}$ and both know a number
($=\relent{\rho_{AB}}{\sigma_{AB}}- \relent{\rho_A}{\sigma_A}$) greater than $\relent{\rho_{AB}}{\tau_{AB}}$.
 They also know the error parameter $\ve>0$. By the virtue of Remark \ref{qbevremark}, at the end of the protocol, Bob obtains a state $\tilde{\rho}_{AB}$, such that $\F(\rho_{AB},\tilde{\rho}_{AB})\geq 1-5\ve$. Communication from Alice is upper bounded by $\O(\br{\relent{\rho_{AB}}{\sigma_{AB}}-\relent{\rho_A}{\sigma_A} +1}/\epsilon^4)$. 
\end{proof}

We now present the protocol $\Prot'$ as mentioned in the Introduction. 
\begin{theorem}
\label{thm:sideinfchannel}
Let $A,B$ be two registers associated to Hilbert spaces $\H_A,\H_B$ respectively. Alice and Bob know a Stinespring representation (Fact \ref{stinespring}) of a quantum channel $\E: \mathcal{L}(A)\rightarrow \mathcal{L}(B)$. Alice is given the eigen-decomposition of a state $\rho \in \mathcal{D}(A)$. Bob is given the eigen-decompositions of a state $\sigma \in \mathcal{D}(A)$ (such that $\text{supp}(\rho)\subset \text{supp}(\sigma)$) and the state $\rho'= \E(\rho)$. Let  $\relent{\rho}{\sigma}-\relent{\E(\rho)}{\E(\sigma)}$ and $\ve >0$ be known to Alice and Bob. There exists a protocol, in which Alice and Bob use shared entanglement and Alice sends $\O((\relent{\rho}{\sigma}-\relent{\E(\rho)}{\E(\sigma)}+1)/\ve^4)$ bits of communication to Bob, such that the state $\tilde{\rho}$ that Bob outputs at the end of the protocol satisfies  $\F(\rho,\tilde{\rho})\geq 1-5\ve$ .
\end{theorem}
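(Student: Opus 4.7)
The plan is to reduce Theorem~\ref{thm:sideinfchannel} to Lemma~\ref{lem:withsideinf} by dilating the channel $\E$ to a unitary via Stinespring (Fact~\ref{stinespring}), so that what Bob holds as side information becomes literally one marginal of a bipartite pure-ish state that both parties can describe.

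More concretely, fix a Stinespring isometry $V:\H_A\to \H_A\otimes\H_B\otimes\H_C$ (equivalently, a unitary $U$ on $ABC$ acting on $\omega\otimes\ketbra{0}^{BC}$) that implements $\E$, so that $\E(\omega)=\Tr_{AC}\br{U(\omega\otimes\ketbra{0}^{BC})U^\dagger}$. Define the lifted states
\begin{equation*}
\hat{\rho}_{ABC}\defeq U(\rho\otimes\ketbra{0}^{BC})U^\dagger,\qquad \hat{\sigma}_{ABC}\defeq U(\sigma\otimes\ketbra{0}^{BC})U^\dagger.
\end{equation*}
Alice, knowing the eigendecomposition of $\rho$ and the (shared) description of $U$, can compute the eigendecomposition of $\hat{\rho}_{ABC}$. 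Bob, knowing the eigendecomposition of $\sigma$ and $U$, can compute the eigendecomposition of $\hat{\sigma}_{ABC}$; moreover, his side information $\rho'=\E(\rho)$ is exactly $\hat{\rho}_B=\Tr_{AC}(\hat{\rho}_{ABC})$.

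Now I would invoke Lemma~\ref{lem:withsideinf} with the roles renamed: treat $B$ as the ``$A$-register'' of the lemma (the one whose marginal Bob possesses) and $AC$ as the ``$B$-register''. Since relative entropy is invariant under unitaries and additive across tensor products with a fixed pure ancilla,
\begin{equation*}
\relent{\hat{\rho}_{ABC}}{\hat{\sigma}_{ABC}}=\relent{\rho}{\sigma},\qquad \relent{\hat{\rho}_{B}}{\hat{\sigma}_{B}}=\relent{\E(\rho)}{\E(\sigma)},
\end{equation*}
so the quantity that Alice and Bob are required to know in the lemma, namely $\relent{\hat{\rho}_{ABC}}{\hat{\sigma}_{ABC}}-\relent{\hat{\rho}_B}{\hat{\sigma}_B}$, equals $\relent{\rho}{\sigma}-\relent{\E(\rho)}{\E(\sigma)}$, exactly the input they are given. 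The support condition $\text{supp}(\hat{\rho}_{ABC})\subset\text{supp}(\hat{\sigma}_{ABC})$ follows from $\text{supp}(\rho)\subset\text{supp}(\sigma)$ and unitary invariance. Applying Lemma~\ref{lem:withsideinf} yields a protocol with the claimed communication cost $\O((\relent{\rho}{\sigma}-\relent{\E(\rho)}{\E(\sigma)}+1)/\ve^4)$ after which Bob holds a state $\tilde{\eta}_{ABC}$ with $\F(\hat{\rho}_{ABC},\tilde{\eta}_{ABC})\geq 1-5\ve$.

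Finally, Bob inverts the dilation by applying $U^\dagger$ and tracing out $BC$, outputting $\tilde{\rho}\defeq \Tr_{BC}\br{U^\dagger\tilde{\eta}_{ABC}U}$. Since $\Tr_{BC}\br{U^\dagger\hat{\rho}_{ABC}U}=\rho$ and fidelity is monotone under CPTP maps (Fact~\ref{fact:monotonequantumoperation}),
\begin{equation*}
\F(\rho,\tilde{\rho})\geq \F(\hat{\rho}_{ABC},\tilde{\eta}_{ABC})\geq 1-5\ve,
\end{equation*}
as desired. There is no real obstacle here beyond checking these bookkeeping identities carefully; the only subtle point is to confirm that Bob can locally prepare the ancilla $\ketbra{0}^{BC}$ and apply $U^\dagger$ without help from Alice, which is immediate because both parties share the Stinespring representation of $\E$ as common knowledge.
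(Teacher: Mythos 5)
Your proposal is correct and follows essentially the same route as the paper: dilate $\E$ via its Stinespring representation, apply Lemma~\ref{lem:withsideinf} to the lifted states $V(\rho\otimes\ketbra{0}_{BC})V^{\dagger}$ and $V(\sigma\otimes\ketbra{0}_{BC})V^{\dagger}$ with Bob's side information being the marginal $\E(\rho)$, and then undo the dilation and invoke monotonicity of fidelity (Fact~\ref{fact:monotonequantumoperation}). Your write-up even makes explicit a few bookkeeping points the paper leaves implicit (the role renaming of registers in the lemma, the identity $\relent{\hat{\rho}_B}{\hat{\sigma}_B}=\relent{\E(\rho)}{\E(\sigma)}$, and the support condition), so no gap remains.
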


\begin{proof}
Let a Stinespring representation of $\E$ be 
$\E(\omega) = 
\Tr_{A,C}\br{V (\omega \ketbra{0}_{BC})V^{\dagger}}$, where  $V:\H_A \otimes \H_B \otimes \H_C \rightarrow \H_A \otimes \H_B\otimes \H_C$ is a unitary operation (Fact~\ref{stinespring}). Alice and Bob compute the states $V\br{\rho\otimes\ketbra{0}_{BC}} V^{\dagger}$ and $V\br{\sigma\otimes\ketbra{0}_{BC}} V^{\dagger}$ , respectively. From Lemma~\ref{lem:withsideinf} and the equality $\relent{V\br{\rho\otimes\ketbra{0}_{BC}}V^{\dagger}}{V\br{\sigma\otimes\ketbra{0}_{BC}} V^{\dagger}}=\relent{\rho}{\sigma}$, there exists a protocol, in which Alice and Bob use shared entanglement and Alice sends $\O(\relent{\rho}{\sigma}-\relent{\E(\rho)}{\E(\sigma)}+1)/\ve^4 $ bits of communication to Bob, such that the state $\tilde{\rho}_{ABC}$ that Bob gets at the end of the protocol satisfies  $\F(V\br{\rho\otimes\ketbra{0}_{BC}} V^{\dagger},\tilde{\rho}_{ABC})\geq1-5\ve$.  Bob outputs $\tilde{\rho} = \Tr_{BC} V^{\dagger}\br{\tilde{\rho}_{ABC}} V $. From monotonicity of fidelity under quantum operation (Fact~\ref{fact:monotonequantumoperation}), $\F(\rho,  \tilde{\rho} )\geq1-5\ve$. 
\end{proof}

\section{A direct sum theorem for quantum one-way communication complexity}
\label{sec:directsum}

As a consequence of Theorem~\ref{thm:protocol} we obtain the following direct sum result for all relations in the model of entanglement-assisted one-way communication complexity. 

\begin{theorem}\label{thm:directsumoneway}
Let $\X,\Y,\Z$ be finite sets, $f\subseteq\X\times\Y\times\Z$ be a relation, $0<\ve,\delta$ be error parameters and $k>1$ be an integer. We have
\[\dqcc{}{\ve}{f^k}\geq \Omega \br{k \br{\delta^9\cdot\dqcc{}{\ve+\delta}{f}-1}}.\]
\end{theorem}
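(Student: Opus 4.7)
The plan is to carry out the standard information-theoretic direct-sum template: Yao's min-max reduces the worst-case complexity to a distributional one under a product hard distribution; super-additivity of quantum mutual information (Fact~\ref{fact:chainrulemutualinf}) isolates a coordinate with small information; and Protocol~$\Prot$ (Theorem~\ref{thm:protocol}) converts the information bound to a communication bound for a single-copy simulation.

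First I would apply Fact~\ref{fact:yaos principle} to fix a distribution $\nu$ on $\X\times\Y$ achieving $\dqcc{}{\ve+\delta}{f} = \dqcc{,\nu}{\ve+\delta}{f}$ and use $\dqcc{}{\ve}{f^k} \geq \dqcc{,\nu^k}{\ve}{f^k} =: C'$. Let $\Pi$ be an optimal $C'$-qubit one-way entanglement-assisted protocol for $f^k$ under $\nu^k$ with error $\ve$, with message register $M$ and Bob's share of pre-shared entanglement $B$; let $\rho^{x_{[k]}}_{MB}$ denote Bob's post-message state, and assemble the classical-quantum state $\rho_{X_{[k]}Y_{[k]}MB}$ under $\nu^k$. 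Since $B$ is independent of inputs, a Holevo-type bound gives $I(X_{[k]}:MB\mid Y_{[k]})_\rho = I(X_{[k]}:M\mid BY_{[k]})_\rho \leq 2C'$. The product structure makes the $X_i$'s mutually independent given $Y_{[k]}$, so Fact~\ref{fact:chainrulemutualinf} yields $\sum_i I(X_i:MB\mid Y_{[k]})_\rho \leq 2C'$, and hence for uniform $i\in[k]$ the expectation is at most $2C'/k$.

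Next I would build the single-copy protocol $\Pi'$ for $f$ under $\nu$. Alice (with input $X$) and Bob (with input $Y$) use shared randomness to sample $i\in[k]$ uniformly and $Y_{-i}\sim\nu_Y^{k-1}$; Alice privately samples $X_{-i}\sim\prod_{j\neq i}\nu(\cdot\mid Y_j)$, so the joint $(X_{[k]},Y_{[k]})\sim\nu^k$, with Alice knowing $(X_{[k]},Y_{-i})$ and Bob only $Y_{[k]}$. The crucial move is that Alice does \emph{not} transmit her actual state $\rho^{X_{[k]}}_{MB}$; instead she transmits the averaged state
\[
\tilde\rho^{X_i,Y_{-i}}_{MB} \;\defeq\; \expec{X'_{-i} \sim \prod_{j\neq i}\nu(\cdot\mid Y_j)}{\rho^{X_i,X'_{-i}}_{MB}},
\]
whose eigen-decomposition she can compute from $(X_i,Y_{-i})$ and $\Pi$; Bob computes the Bayesian prior
\[
\sigma^{Y_{[k]}}_{MB} \;=\; \expec{X'_{[k]} \sim \prod_j \nu(\cdot\mid Y_j)}{\rho^{X'_{[k]}}_{MB}}
\]
from his view. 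They invoke Protocol~$\Prot$ (via Remark~\ref{qbevremark}) on $(\tilde\rho,\sigma)$ with error parameter $\ve'=c\delta^2$ and a publicly agreed upper bound $r = c'C'/(k\delta)$; Markov's inequality applied to $\expec{}{\relent{\tilde\rho^{X_i,Y_{-i}}_{MB}}{\sigma^{Y_{[k]}}_{MB}}} = \mathbb{E}_i[I(X_i:MB\mid Y_{[k]})_\rho] \leq 2C'/k$ ensures the true relative entropy is at most $r$ except with probability $\delta$. Upon success Bob applies $V_{Y_{[k]}}$ from $\Pi$ to his recovered state and measures coordinate $i$'s answer register.

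The error-and-communication accounting is then routine. By linearity of the quantum channel and measurement, Bob's output distribution on $\tilde\rho^{X_i,Y_{-i}}$ equals the $X'_{-i}$-averaged coordinate-$i$ output distribution of $\Pi$, which by correctness of $\Pi$ under $\nu^k$ agrees with $f(X_i,Y_i)$ with probability $\ge 1-\ve$ in expectation. Fidelity $\ge 1-5\ve'$ translates (Fact~\ref{fact:tracefidelityequi}) to trace distance $O(\sqrt{\ve'})=O(\delta)$, the Markov event contributes $\delta$, and total error is $\ve+O(\delta)$ which is absorbable into $\ve+\delta$. The communication is $O((r+1)/\ve'^4) = O(C'/(k\delta^9) + 1/\delta^9)$. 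Rearranging $\dqcc{,\nu}{\ve+\delta}{f} \leq O(C'/(k\delta^9) + 1/\delta^9)$ and invoking Fact~\ref{fact:yaos principle} yields $C' \geq \Omega(k(\delta^9\,\dqcc{}{\ve+\delta}{f} - 1))$, which is the claim.

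The main obstacle is arranging the compression so that its average cost is governed by $I(X_i:MB\mid Y_{[k]})$, which super-adds to $\leq 2C'$, rather than the full $I(X_{[k]}:MB\mid Y_{[k]})$, which would offer no per-coordinate saving. The averaging trick resolves this: Alice's mixed state $\tilde\rho$ is precisely the effective message channel her private randomness $X_{-i}$ induces on Bob, and its relative entropy to Bob's Bayesian prior equals the marginal mutual information. The remaining checks — that both parties can compute the required eigen-decompositions from their views, that the support containment required by Theorem~\ref{thm:protocol} holds because $\tilde\rho^{X_i,Y_{-i}}$ appears as a summand of the mixture defining $\sigma^{Y_{[k]}}$, and that the $X_{-i}$-averaged coordinate-$i$ output inherits the error bound of $\Pi$ — are routine consequences of the product structure of $\nu^k$.
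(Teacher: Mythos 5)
Your proposal is correct and follows the same overall template as the paper: reduce to the distributional setting via Fact~\ref{fact:yaos principle}, bound a per-coordinate internal information quantity by superadditivity (at most $2C'/k$ on average), invoke Theorem~\ref{thm:protocol} together with Remark~\ref{qbevremark} after a Markov step to fix a common upper bound on the relative entropy, take the error parameter $\Theta(\delta^2)$ so that the fidelity loss becomes an $O(\delta)$ trace-distance loss (this is exactly where the $\delta^9$ comes from), and account errors as $\ve+O(\delta)$. Where you genuinely differ is in the embedding. The paper introduces the auxiliary variables $D,U$ (the Braverman--Rao-style "which half is public" coins), conditions the information on $Y_jD_{-j}U_{-j}$, fixes the single best coordinate $j$, uses the Markov identity $\rho_B^{x_jy_jd_{-j}u_{-j}}=\rho_B^{x_jd_{-j}u_{-j}}$ so that Alice can compute her state, and lets Bob privately resample $Y_{-j}$. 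You instead make all of $Y_{-i}$ public, let Alice's input to the compression be the $X_{-i}$-averaged message state $\tilde\rho^{X_i,Y_{-i}}_{MB}$, condition the information on the full $Y_{[k]}$, and pick the coordinate $i$ uniformly with Markov over both $i$ and the inputs. For one-way protocols this is legitimate and arguably cleaner: the message state depends only on Alice's inputs, Bob's unitary needs only $Y_{[k]}$ and the message, the identity $\expec{}{\relent{\tilde\rho^{X_i,Y_{-i}}}{\sigma^{Y_{[k]}}}}=\condmutinf{X_i}{ME_B}{Y_{[k]}}$ holds because the coordinates of $\nu^k$ are independent, and $\sum_i\condmutinf{X_i}{ME_B}{Y_{[k]}}\leq\condmutinf{X_{[k]}}{ME_B}{Y_{[k]}}\leq 2C'$ follows by the same superadditivity argument the paper uses (your conditioning system also factorizes coordinate-wise jointly with the $X_i$, which is what the proof of Fact~\ref{fact:chainrulemutualinf} really needs). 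What the paper's $D,U$ machinery buys, and yours does not, is a form of the argument that is designed to extend to settings where Bob would also have to privately generate other-coordinate inputs (e.g.\ multi-round protocols); for the one-way theorem as stated the two routes give the same parameters. One small bookkeeping point: do the constant rescaling of $\delta$ \emph{before} fixing the maximizing distribution $\nu$ at error $\ve+\delta$, since your single-copy protocol incurs error $\ve+c_0\delta$ for some constant $c_0>1$; this is the same cosmetic adjustment the paper makes at the end of its proof.
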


\begin{proof}
Let  $\mu$ be any distribution over $\X\times\Y$. We show the following, which combined with Fact~\ref{fact:yaos principle} implies the desired:
\[\dqcc{, \mu^k}{\ve}{f^k}\geq\Omega \br{k \br{\delta^9\cdot\dqcc{, \mu}{\ve+\delta}{f}-1}}.\]

Let $\Q$ be a quantum one-way protocol with communication $c\cdot k$ computing $f^k$  with  overall probability of success at least $1-\ve$ under distribution $\mu^k$. Let the inputs to Alice and Bob be given in registers $X_1,X_2\ldots X_k$ and $Y_1,Y_2\ldots Y_k$. For brevity, we define $X\defeq X_1,X_2\ldots X_k$ and $Y\defeq Y_1,Y_2\ldots Y_k$. Thus, the state $\sum_{xy}\mu^k(x,y)\ketbra{xy}_{XY}$ represents the joint input, where $x$ is drawn from $X$ and $y$ is drawn from $Y$.

Let $\sigma_{E_A,E_B}$ be the shared entanglement between Alice and Bob where register $E_A$ is with Alice and $E_B$ with Bob. Alice applies unitary $U:\H_X\otimes \H_{E_A}\rightarrow \H_X\otimes \H_A\otimes \H_M$, where $E_A\equiv AM$, sends the message register $M$ to Bob, and then Bob applies the unitary $V: \H_Y\otimes\H_M\otimes \H_{E_B}\rightarrow \H_Y\otimes\H_{B'}\otimes\H_Z$, where $ME_B\equiv B'Z$. Since unitary operations by Alice and Bob are conditioned on their respective inputs, the unitaries $U,V$ are  of the form $U=\sum_x\ketbra{x}_X\otimes U_x$ and $V=\sum_y\ketbra{y}_Y\otimes V_y$, where $U_x:\H_{E_A}\rightarrow \H_A\otimes \H_M$ and $V_y: \H_M\otimes \H_{E_B}\rightarrow \H_{B'}\otimes\H_Z$.
Let the following be the global state before Alice applies her unitary:
$$\theta_{XYE_AE_B} = \sum_{xy}\mu^k(x,y)\ketbra{xy}_{XY}\otimes\sigma_{E_AE_B}.$$ 

Let $D=D_1\cdots D_k$ be a random variable uniformly distributed over $\set{0,1}^k$ and independent of the input $XY$. Define random variables $U_1,U_2\ldots U_k$ such that $U_i=X_i$ if $D_i=0$ and $U_i=Y_i$ if $D_i=1$. Let $U=U_1,U_2\ldots U_k$. Consider the state $\theta_{XYE_AE_BDU}$, with registers $D,U$ as defined above.

 Let $\rho_{XYAME_BDU}\defeq U\theta_{XYE_AE_BDU} U^{\dagger}$ be the state after Alice applies her unitary and sends $M$ to Bob. Since $$\condmutinf{XE_AE_B}{Y}{DU}_{\theta}=0,$$ it holds that  
\[\condmutinf{XAE_BM}{Y}{DU}_{\rho}=0.\]
From the definition of $DU$, we thus have (below $-i$ represents the set $\{1,2\ldots i-1,i+1\ldots k\}$),
\begin{equation}\label{eqn:dumarkov}
\condmutinf{X_{-i}AE_BM}{Y}{X_iD_{-i}U_{-i}}_{\rho}=\condmutinf{XAE_BM}{Y_{-i}}{Y_iD_{-i}U_{-i}}_{\rho}=0.
\end{equation}
Since $\log\abs{M}\leq ck$ and register $E_B$ is independent of registers $XYDU$ in the state $\rho_{E_BXYDU}$, we have
\begin{eqnarray*}
\mutinf{XYDU}{ME_B}_{\rho}&=&\mutinf{XYDU}{E_B}_{\rho}+\condmutinf{XYDU}{M}{E_B}_{\rho}\\ &=& \condmutinf{XYDU}{M}{E_B}_{\rho} \leq 2 \log\abs{M} \leq 2ck,
\end{eqnarray*}
where the second last inequality is from Fact~\ref{fact:conditionalmutinf}.
Consider
\begin{eqnarray*}
2ck&\geq& \mutinf{XYDU}{ME_B}_{\rho}\geq \condmutinf{XY}{ME_B}{DU}_{\rho}\\ &\geq&\sum_{i=1}^k\condmutinf{X_iY_i}{ME_B}{DU}_{\rho} \quad (\text{Fact} ~\ref{fact:chainrulemutualinf})\\
&=& \sum_{i=1}^k \condmutinf{X_iY_i}{ME_B}{D_iU_iD_{-i}U_{-i}}_{\rho} 	\\ &=&\frac{1}{2}\br{\sum_{i=1}^k\br{\condmutinf{X_i}{ME_B}{Y_iD_{-i}U_{-i}}_{\rho}+\condmutinf{Y_i}{ME_B}{X_iD_{-i}U_{-i}}_{\rho}}}\\
&\geq&\frac{1}{2}\sum_{i=1}^k\condmutinf{X_i}{ME_B}{Y_iD_{-i}U_{-i}}_{\rho}.
\end{eqnarray*}
where the last equality is from the definition of $DU$ and the last inequality is from Fact~\ref{fact:strongsubadditivity}.
Hence there exists $j\in[k]$ such that
\begin{equation}\label{eqn:directsummessagelength}
\condmutinf{X_j}{ME_B}{Y_jD_{-j}U_{-j}}_{\rho}\leq 4c.
\end{equation}
Furthermore, we have 
\begin{equation}\label{eqn:directsumpubliccoins}
\mutinf{X_jY_j}{D_{-j}U_{-j}}_{\rho}= \mutinf{X_jY_j}{D_{-j}U_{-j}}_{\theta} = 0.
\end{equation}
since the unitary by Alice does not change the state on registers $DUXY$.

For brevity, set $B\defeq ME_B$. Define the following states, which are obtained by conditioning on various classical registers: $$\rho_B^{x_jy_jd_{-j}u_{-j}}\defeq \frac{\bra{x_jy_jd_{-j}u_{-j}}\rho_{BXYDU}\ket{x_jy_jd_{-j}u_{-j}}}{\bra{x_jy_jd_{-j}u_{-j}}\rho_{XYDU}\ket{x_jy_jd_{-j}u_{-j}}},$$

$$\rho_B^{x_jd_{-j}u_{-j}}\defeq \frac{\bra{x_jd_{-j}u_{-j}}\rho_{BXDU}\ket{x_jd_{-j}u_{-j}}}{\bra{x_jd_{-j}u_{-j}}\rho_{XDU}\ket{x_jd_{-j}u_{-j}}}$$

$$\rho_B^{y_jd_{-j}u_{-j}}\defeq \frac{\bra{y_jd_{-j}u_{-j}}\rho_{BYDU}\ket{y_jd_{-j}u_{-j}}}{\bra{y_jd_{-j}u_{-j}}\rho_{YDU}\ket{y_jd_{-j}u_{-j}}}$$

From~(\ref{eqn:dumarkov}), we have
\[\condmutinf{Y}{B}{X_jD_{-j}U_{-j}}_{\rho}=0.\]
which is equivalent to, using Fact \ref{fact:cqcondmut} and the fact that registers $X,Y,U,D$ are classical in $\rho_B$: 
$$\expec{x_jy_jd_{-j}u_{-j}\leftarrow X_jY_jD_{-j}U_{-j}}{\relent{\rho_B^{x_jy_jd_{-j}u_{-j}}}{\rho_B^{x_jd_{-j}u_{-j}}}}=0.$$ This implies $\rho_B^{x_jy_jd_{-j}u_{-j}}=\rho_B^{x_jd_{-j}u_{-j}}$ for all $x_j,y_j,d_{-j}u_{-j}$. 

From~(\ref{eqn:directsummessagelength}), and Fact \ref{fact:cqcondmut},
$$\expec{x_jy_jd_{-j}u_{-j}\leftarrow X_jY_jD_{-j}U_{-j}}{\relent{\rho_B^{x_jy_jd_{-j}u_{-j}}}{\rho_B^{y_jd_{-j}u_{-j}}}}\leq 4c.$$ 

Let 
$$G\defeq\set{(x_j,y_j,d_{-j},u_{-j}):\relent{\rho_B^{x_jy_jd_{-j}u_{-j}}}{\rho_B^{y_jd_{-j}u_{-j}}}\leq \frac{4c}{\delta}}.$$ By Markov's inequality, 
$$\prob{X_jY_jD_{-j}U_{-j}\in G}\geq1-\delta. $$

Now, we exhibit an entanglement-assisted one-way protocol $\Q'$ for $f$ with communication less than $c$ and distributional error $\ve$ under distribution $\mu$. 
\begin{mdframed}
\bigskip
\begin{enumerate}
\item Alice and Bob share public coins according to distribution $\rho_{D_{-j}U_{-j}}$, and the shared entanglement needed to run the protocol $\P$ from Theorem ~\ref{thm:protocol}.
\item Alice and Bob are given the input  $(x,y)\sim\mu$. They embed the input to the $j$-th coordinate $X_jY_j$. The input is independent of shared randomness, from equation (\ref{eqn:directsumpubliccoins}). 
\item Given input $(x_j,y_j)\equiv (x,y)$ and shared public coins $d_{-j}u_{-j}$, Alice knows the eigen-decomposition of the state $\rho_B^{x_jy_jd_{-j}u_{-j}}$, since $\rho_B^{x_jy_jd_{-j}u_{-j}}=\rho_B^{x_jd_{-j}u_{-j}}$. Bob knows the eigen-decomposition of state $\rho_{B}^{y_jd_{-j}u_{-j}}$.
\item They  run the protocol in Theorem ~\ref{thm:protocol} with inputs $\rho_B^{x_jy_jd_{-j}u_{-j}}$, $\frac{4c}{\delta}$ (given to Alice) and $\rho_B^{y_jd_{-j}u_{-j}}$, $\frac{4c}{\delta}$ (given to Bob). After communicating $\mathcal{O}(4c/\delta^9)$  bits to Bob, Bob receives a state $\sigma_B^{x_jy_jd_{-j}u_{-j}}$ satisfying $\|\sigma_B^{x_jy_jd_{-j}u_{-j}}-\rho_B^{x_jy_jd_{-j}u_{-j}}\|_1\leq \delta$ if $(x_j,y_j,d_{-j},u_{-j})\in G$. 
\item Bob samples the distribution from $\rho_{Y_{-j}}$, since he has the registers $D_{-j}U_{-j}Y_j$. This is possible from equation \ref{eqn:dumarkov}, which states that register $Y_{-j}$ is independent of registers $A,B,X$ conditioned on registers $D_{-j}U_{-j}Y_j$. 
\item Bob applies the unitary $V$, as in the protocol $\Q$, on registers $BY\equiv E_BMY$ and then measures the register $Z$. He outputs the outcome. 
\end{enumerate}
\end{mdframed}
\bigskip
From the protocol, it is clear that overall distributional error in $\Q'$ is at most $2\delta+\ve$. The error $2\delta$ occurs since the state $\sigma_B^{x_jy_jd_{-j}u_{-j}}$ satisfies $\|\sigma_B^{x_jy_jd_{-j}u_{-j}}-\rho_B^{x_jy_jd_{-j}u_{-j}}\|_1\leq \delta$ and the probability that $(x_j,y_j,d_{-j},u_{-j})\notin G$ is at most $\delta$. The error $\ve$ is due to the original protocol $\Q$.
Hence 
\[\dqcc{, \mu}{\ve+2\delta}{f}\leq \O((c+1)/\delta^9),\]
which implies (changing $\delta\rightarrow \frac{\delta}{2}$)
\[\dqcc{, \mu^k}{\ve}{f^k}\geq\Omega \br{k \br{\delta^9\cdot\dqcc{, \mu}{\ve+\delta}{f}-1}}.\]
\end{proof}

\section{Quantum correlated sampling}

In this section, we give a quantum analogue to classical correlated sampling. In our framework, Alice and Bob (given quantum states $\rho$ and $\sigma$ respectively as inputs) create a joint quantum state with marginals $\rho$ and $\sigma$ on respective sides. The joint state has the property that same projective measurement performed by Alice and Bob gives very correlated outcomes, if $\rho$ and $\sigma$ are close to each other in $\ell_1$ distance.  
 Following theorem makes this sampling task precise.  
\label{sec:qcorr}
\begin{theorem}\label{thm:correlatedsampling}
Let $\rho,\sigma$ be quantum states on a Hilbert space $\H$ of dimension $N$. Alice is given the eigen-decomposition of $\rho$ and Bob is given the eigen-decomposition of $\sigma$.  There exists a zero-communication  protocol satisfying the following.
\begin{enumerate}
\item Alice outputs registers  $A_1,A_2$ and and Bob outputs registers $B_1,B_2$ respectively, such that state in $A_1$ is $\rho$, the state in $B_1$ is $\sigma$ and $A_1\equiv B_1, A_2\equiv B_2$.
\item Let $M = \{M_1,M_2\ldots M_w\}$ be a projective measurement, in the support of $A_1A_2$. Let $M$ be performed by Alice on the joint system $A_1A_2$ with outcome $I \in [w]$ and by Bob on the joint system $B_1B_2$ with outcome $J \in [w]$. Then $\Pr[I = J] \geq \left(1-\sqrt{\onenorm{\rho-\sigma} - \frac{1}{4}\onenorm{\rho-\sigma}^2}\right)^{3}$. 
\end{enumerate}
\end{theorem}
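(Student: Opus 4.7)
The plan is to adapt the shared-entanglement quantum rejection sampling from the proof of Theorem~\ref{thm:protocol} to a zero-communication setting. Alice and Bob will share a large number of i.i.d.\ copies of the entangled state $\ket{S}_{A_1 A_2 B_1 B_2}$ defined in equation~(\ref{eq:sharedstate}). On each copy, Alice locally performs the projective measurement $\set{P_A, I_A - P_A}$ with $P_A$ defined from the eigendecomposition of $\rho$ as in equation~(\ref{eq:aliceproj}), and Bob locally performs $\set{P_B, I_B - P_B}$ with $P_B$ defined from the eigendecomposition of $\sigma$ as in equation~(\ref{eq:bobproj}) but with $2^{c'}/\delta$ replaced by $1$ (so that $P_B$ depends only on $\sigma$, not on $\relent{\rho}{\sigma}$). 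Each party outputs the register pair of the first copy on which its measurement succeeds, with an arbitrary default state if no copy succeeds. No communication is exchanged.

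First I would verify item~1 and the marginal conditions in item~2. The register identifications follow by construction. Repeating the calculation behind Claim~\ref{claim:boboutput} on each side independently, the post-measurement marginal on Alice's $A_1$ of any accepted copy is exactly $\rho$, and symmetrically the marginal on $B_1$ is $\sigma$. These marginals are insensitive to the other party's measurement since the sides act on disjoint registers of the shared state.

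The main task is item~3. The key idea is to compare the actual joint post-selected state on $A_1 A_2 B_1 B_2$ to a reference ``fully correlated'' state $\ket{\Xi}$ satisfying $A_1 A_2 \equiv B_1 B_2$ in a basis where identical projective measurements on both sides automatically give identical outcomes. In the special case $\rho = \sigma$, the protocol's post-selected state is (up to the standard complex-conjugation convention on the $B$-side) precisely of the form $\sum_i \sqrt{a_i}\ket{a_i}\ket{\conjugate{a_i}}$ tensored with a maximally correlated auxiliary piece on $A_2 B_2$, and the correlation of outcomes is perfect. For general $\rho \neq \sigma$, I would use Uhlmann's theorem together with Fact~\ref{fact:tracefidelityequi} to select purifications of $\rho$ and $\sigma$ on $A_1 A_2$ and $B_1 B_2$ with inner product at least $\F(\rho,\sigma) \geq 1 - \frac12\onenorm{\rho - \sigma}$, and then argue that the actual joint state is close to such a reference purification.

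The cube $(1 - \sqrt{\onenorm{\rho-\sigma} - \frac14\onenorm{\rho-\sigma}^2})^3$ should arise from composing three fidelity estimates, each contributing one factor of $1 - \sqrt{\onenorm{\rho-\sigma} - \frac14\onenorm{\rho-\sigma}^2} = 1 - \sqrt{1 - (1 - \frac12\onenorm{\rho-\sigma})^2}$ via Fact~\ref{fact:tracefidelityequi}: (i) the event that the first index on which Alice succeeds coincides with the first index on which Bob succeeds, analogous to Claim~\ref{claim:relativeprob} but without any communication step; (ii) the closeness of Alice's post-selected state on $A_1 A_2$ to an Uhlmann purification of $\rho$ jointly aligned with Bob's side; and (iii) the closeness of that aligned purification to one matching Bob's marginal $\sigma$. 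The main obstacle is precisely this bookkeeping: each fidelity loss must be bounded exactly by $\F(\rho,\sigma)$-style quantities and must be transferred through a projective measurement applied identically on both sides, which I expect to do using the monotonicity of fidelity under measurement channels (Fact~\ref{fact:monotonequantumoperation}) together with the Fuchs--van de Graaf inequalities of Fact~\ref{fact:tracefidelityequi}. The fact that the bound must hold uniformly over \emph{all} projective measurements $M$, rather than for a fixed one, is what forces the estimates to be expressed at the level of state fidelity rather than outcome statistics.
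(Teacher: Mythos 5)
Your protocol is the paper's protocol (shared copies of $\ket{S}$, local measurements $\set{P_A,I-P_A}$ and $\set{P_B,I-P_B}$ with Bob's cutoff $Kb_i$, each side keeping its first successful copy), and your high-level strategy — compare the joint output state with a perfectly correlated reference state and transfer closeness to outcome statistics via monotonicity of fidelity — is also the paper's. But the quantitative core of your plan has a genuine gap. The paper's reference state is the specific vector $\ket{\theta}\propto (P_A\otimes P_A)\ket{S}$, which is maximally entangled (with the conjugation convention) on the support of $P_A$; this mirror structure is exactly what makes $(M_i\otimes M_i)\ket{\theta}=(M_i\otimes I)\ket{\theta}$ and hence gives perfect outcome correlation for \emph{every} projective measurement in the support. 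Uhlmann's theorem cannot substitute for this: a pair of purifications of $\rho$ on $A_1A_2$ and of $\sigma$ on $B_1B_2$ with large inner product is not a single joint state across the two sides, has no reason to have the mirror structure, and — more fundamentally — the protocol's output state $\tau$ is fixed, so its closeness to the reference must be \emph{computed} from the protocol, not inferred from an existential optimality statement about purifications. The computation the paper does, and which your proposal never identifies, is the overlap $\sum_{i,j}\min(a_i,b_j)\abs{\braket{a_i}{b_j}}^2\geq 1-\sqrt{\onenorm{\rho-\sigma}-\tfrac14\onenorm{\rho-\sigma}^2}$, obtained by comparing the classical distributions $R_{ij}=a_i\abs{\braket{a_i}{b_j}}^2$ and $R'_{ij}=b_j\abs{\braket{a_i}{b_j}}^2$ whose fidelity is $\Tr\sqrt{\rho}\sqrt{\sigma}$.

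Your three-factor bookkeeping is also not how the exponent $3$ arises, and as stated steps (i)--(iii) would not go through. Conditioning on "Alice's and Bob's first success indices coincide" is handled in the paper not by a probability estimate analogous to Claim~\ref{claim:relativeprob} (whose bound relied on the substate-theorem scaling $2^{c'}/\delta$, absent here), but by the operator inequality of Claim~\ref{claim:tau}: $\tau\geq (P_A\otimes P_B)\ketbra{S}(P_A\otimes P_B)/\br{1-\bra{S}(I_A-P_A)\otimes(I_B-P_B)\ket{S}}$, proved via a geometric series over the i.i.d.\ copies together with $\bra{S}P_A\otimes I_B\ket{S}=\bra{S}I_A\otimes P_B\ket{S}=1/N$; this cleanly discards the "different index" events as positive contributions rather than needing to multiply conditional probabilities. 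The cube then comes from a single quantity: the same overlap appears squared in the numerator (from $\abs{\bra{\theta}P_A\otimes P_B\ket{S}}^2$) and once in the denominator, giving $\bra{\theta}\tau\ket{\theta}\geq (1-x)^2/(1+x)\geq(1-x)^3$ with $x=\sqrt{\onenorm{\rho-\sigma}-\tfrac14\onenorm{\rho-\sigma}^2}$ (Claim~\ref{overlaptau}), after which $\Tr(E\tau)\geq\bra{\theta}\tau\ket{\theta}$ follows from monotonicity of fidelity under the two-outcome measurement $\set{E,I-E}$, $E=\sum_i M_i\otimes M_i$. Without the explicit reference state $\ket{\theta}$, the operator lower bound on $\tau$, and the $\min(a_i,b_j)$ overlap computation, your outline does not yet yield the stated bound.
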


\begin{proof}
Let eigen-decomposition of $\rho$ be $\sum_{i=1}^Na_i\ketbra{a_i}$ and of $\sigma$ be $\sum_{i=1}^Nb_i\ketbra{b_i}$. Let $\set{\ket{1},\ket{2}\ldots \ket{N}}$ be an orthonormal basis for $\H$. We assume that $a_1, \ldots, a_N, b_1,  \ldots b_N$ are rational numbers and let $K$ be the least common multiple of their denominators. The error due to this assumption goes to $0$ as $K \rightarrow \infty$. 

Introduce registers $A_1,B_1$ associated to $\H$ and registers $A_2,B_2$ associated to some Hilbert space $\H'$ with an orthonormal basis $\set{\ket{1},\ket{2}\ldots \ket{K}}$.
 
Consider the following state shared in $A_1,A_2,B_1,B_2$.
\begin{equation*}
\ket{S}_{A_1B_1A_2B_2} \defeq \frac{1}{\sqrt{KN}}\sum_{i=1}^N \ket{i,i}_{A_1B_1} \otimes \left(\sum_{m=1}^{K}\ket{m, m}_{A_2B_2} \right)
\end{equation*}

 For brevity, define the registers $A\defeq A_1A_2$ and $B \defeq B_1B_2$. The protocol is described below.
\begin{mdframed}
\bigskip
\textbf{Input:} Alice is given $\rho=\sum_{i=1}^Na_i\ketbra{a_i}$. Bob is given $\sigma=\sum_{i=1}^Nb_i\ketbra{b_i}$.
\bigskip

\textbf{Shared resources:} Alice and Bob hold infinitely many registers $A_1^iA_2^iB_1^iB_2^i$ ($i>0$), such that $A_1^i\equiv A_1, A^i_2\equiv A_2, B_1^i\equiv B_1,B_2^i\equiv B_2$. The shared state in register $A_1^iA_2^iB_1^iB_2^i$ is $\ket{S}_{A_1^iA_2^iB_1^iB_2^i}$. Let $A \equiv A_1A_2$ and $B\equiv B_1B_2$ be used as output registers. Let $i$ refer to the `index' of corresponding registers.

\begin{enumerate}

\item For each $i>0$, Alice performs the measurement $\set{P_A,I- P_A}$ on the registers $A^i_1A^i_2$, where
\begin{equation*}
P_A\defeq\sum_i\ket{a_i}_{A_1}\bra{a_i}_{A_1}\otimes \left(\sum_{m=1}^{Ka_i}\ket{m}_{A_2}\bra{m}_{A_2}\right)
\end{equation*}

She declares \textit{success} if she obtains outcome corresponding to $P_A$. She stops once she succeeds in some register $A^j$, and swaps $A^j$ with $A$. 

\item For each $i>0$, Bob performs the measurement $\set{P_B,I- P_B}$ on the registers $B^i_1B^i_2$, where
\begin{equation*}
P_B\defeq\sum_i\ket{b_i}_{B_1}\bra{b_i}_{B_1}\otimes \left(\sum_{m=1}^{Kb_i}\ket{m}_{B_2}\bra{m}_{B_2}\right)
\end{equation*}

He declares \textit{success} if he obtains outcome corresponding to $P_B$. He stops once he succeeds in some register $B^j$, and swaps $B^j$ with $B$.

\end{enumerate}

\end{mdframed}
\bigskip

At the end of above protocol, let the joint state in the register $AB$ be $\tau$. The following claim shows the first part of the theorem. 
\begin{claim}\label{redstate}
$\Tr_{A_2B_1B_2}(\tau)=\rho$ and $\Tr_{A_1A_2B_2}(\tau)=\sigma$.
\end{claim}
\begin{proof}
It is easily seen that the marginal of the state $(P_A \otimes I_B) \ket{S}$ in register $A$  is $\rho$. Similarly the marginal of the state $(I_A \otimes P_B) \ket{S}$ in register $B$ in  is $\sigma$.
\end{proof}

Following series of claims establish second part of the theorem.
\begin{claim} \label{claim:tau}
\begin{equation*}
\tau \geq \frac{(P_A\otimes P_B)\ket{S}\bra{S}(P_A\otimes P_B) }{1 - \bra{S} (I_A-P_A)\otimes(I_B-P_B)\ket{S}} \enspace .
\end{equation*}
\end{claim}
\begin{proof}
Consider the event that Alice and Bob succeed at the same index. The resulting state in $AA_1BB_1$ is 
$$\frac{(P_A\otimes P_B ) \ket{S}\bra{S} (P_A\otimes P_B)}{\bra{S}(P_A \otimes P_B)\ket{S}},$$ 
and this event occurs with probability 
$$\sum_{i=0}^{\infty}\bra{S}(I_A-P_A)\otimes(I_B-P_B)\ket{S}^{i} \cdot \bra{S}(P_A \otimes P_B)\ket{S}= \frac{\bra{S}(P_A \otimes P_B)\ket{S}}{1 - \bra{S}(I_A-P_A)\otimes(I_B-P_B)\ket{S}}.$$ 
Since the cases of Bob succeeding before Alice and Alice succeeding before Bob add positive operators to $\tau$, we get the desired.  
\end{proof}

\begin{claim}\label{overlaptau}
Let $\ket{\theta} \defeq \frac{(P_A\otimes P_A)\ket{S}}{\norm{(P_A\otimes P_A)\ket{S}}}$. Then 
$$\bra{\theta}\tau\ket{\theta} \geq \frac{\left(1-\sqrt{\onenorm{\rho-\sigma} - \frac{1}{4}\onenorm{\rho-\sigma}^2}\right)^2}{1 + \sqrt{\onenorm{\rho-\sigma} - \frac{1}{4}\onenorm{\rho-\sigma}^2}} \geq \left(1-\sqrt{\onenorm{\rho-\sigma} - \frac{1}{4}\onenorm{\rho-\sigma}^2}\right)^3 .$$ 
\end{claim}
\begin{proof}
Consider,
\begin{align*}
\bra{\theta}\tau\ket{\theta} &\geq   \frac{ \abs{\bra{\theta}P_A\otimes P_B\ket{S}}^2}{1 - \bra{S} (I_A-P_A)\otimes(I_B-P_B)\ket{S}} &  \mbox{(Claim~\ref{claim:tau})}\\ 
& = \frac{ \abs{\bra{\theta}P_A\otimes P_B\ket{S}}^2}{2/N-\bra{S}P_A\otimes P_B\ket{S}} & \mbox{(using $\bra{S}P_A\otimes I_B\ket{S}=\bra{S}I_A\otimes P_B\ket{S}=1/N$)}.
\end{align*}

By direct calculation, we get 
$$(P_A\otimes P_B)\ket{S}=\frac{1}{\sqrt{KN}}\sum_{i,j}\ket{\conjugate{a_i}}\braket{b_j}{a_i}\ket{b_j}\sum_{m=1}^{K\min(a_i,b_j)}\ket{m,m} ;$$ 
$$\ket{\theta}=\frac{1}{\sqrt{K}}\sum_i\ket{\conjugate{a_i}}\ket{a_i}\sum_{m=1}^{Ka_i}\ket{m,m} .$$ 
Hence,
\begin{equation} \label{eq:thetatau}\bra{\theta}\tau\ket{\theta} \geq \frac{\left(\sum_{i,j}\min(a_i,b_j) \abs{\braket{a_i}{b_j}}^2\right)^2}{2-\sum_{i,j}\min(a_i,b_j) \abs{\braket{a_i}{b_j}}^2} .
\end{equation}

Define $R_{ij}\defeq a_i \abs{\braket{a_i}{b_j}}^2$ and $R'_{ij}\defeq b_i \abs{\braket{a_i}{b_j}}^2$. Note that both $\{R_{ij}\}$ and $\{R'_{ij}\}$ form probability distributions over $[N^2]$. Also note that $\F(R,R') = \Tr(\sqrt{\rho}\sqrt{\sigma})$. Consider (using relation between fidelity and $\ell_1$ distance, Facts~\ref{fact:fidelityvstrace} and~\ref{fact:tracefidelityequi}),
\begin{align}
 \sum_{i,j}\min(R_{ij},R'_{i,j}) & = 1 - \frac{1}{2}\onenorm{R-R'}  \geq 1 - \sqrt{1-\F(R,R')^2} \nonumber \\
 &= 1 - \sqrt{1-(\Tr \sqrt{\rho}\sqrt{\sigma})^2}  \geq  1-\sqrt{\onenorm{\rho-\sigma} - \frac{1}{4}\onenorm{\rho-\sigma}^2} \label{eq:minR}. 
  \end{align}
  Combining Equations~\eqref{eq:thetatau} and~\eqref{eq:minR} we get the desired.
\end{proof}

\begin{claim}
Let $M=\{M_1,M_2\ldots M_w\}$ be a projective measurement  in the support of $A_1A_2$. Let $E=\sum_{i=1}^w M_i\otimes M_i$. Then $\Tr (E \ketbra{\theta}) = 1$.
\end{claim}

\begin{proof}
Since $M_i$ is a projector in the support of $A_1A_2$, we have $(M_i\otimes M_i)\ket{\theta} = (M_i\otimes I)\ket{\theta}$. Hence, \[\bra{\theta}E\ket{\theta}=\sum_i\bra{\theta}M_i\otimes M_i\ket{\theta} = \sum_i\bra{\theta}M_i\otimes I\ket{\theta} = 1 . \myqedhere \]
\end{proof}
Finally  using monotonicity of fidelity under quantum operation (Fact~\ref{fact:monotonequantumoperation}) and Claim~\ref{overlaptau} we get the second part of the theorem as follows. 
\[\sqrt{\Tr (E\tau)} \geq  \F(\tau , \ketbra{\theta}) = \sqrt{\bra{\theta}\tau\ket{\theta}} \geq \left(1-\sqrt{\onenorm{\rho-\sigma} - \frac{1}{4}\onenorm{\rho-\sigma}^2}\right)^{3/2} . \myqedhere \]

\end{proof}

\section{Conclusion and Open questions}
\label{sec:conclusion}
 
We have described two one shot quantum protocols, one of which has been applied to direct sum problem in quantum communication complexity. Our first protocol is a compression protocol, in which communication of a quantum state $\rho$ (held by Alice) can be made much smaller than $\log(|\text{supp}(\rho)|)$, given a description of an another quantum state $\sigma$ with Bob. This protocol is then used to obtain a direct sum result for one round entanglement assisted communication complexity. It may be noted that this application has been superseded by a recent result of Touchette \cite{Touchette:2015} for bounded round entanglement assisted communication complexity models. 

Our second protocol is a quantum generalization of classical correlated sampling. We show that if Alice and Bob are given descriptions of quantum states $\rho$ and $\sigma$, respectively, then they can create a joint state with marginals $\rho$ (on Alice's side) and $\sigma$ (on Bob's side), such that the joint state is correlated. Any measurement done joint by both parties gives highly correlated outcomes, if $\rho$ and $\sigma$ are close to each other in $\ell_1$ distance. 

Some interesting open questions related to this work are as follows.

\begin{enumerate}
\item Can we show a direct product result for all relations in the  one-way entanglement assisted communication model ?
\item Can we show a direct product result for all relations in the bounded-round   entanglement assisted communication model ?
\item Can we find other interesting applications of the protocols appearing in this work ?
\end{enumerate}

\subsection*{Acknowledgment} We thank Mario Berta, Ashwin Nayak, Mark M. Wilde and Andreas Winter for helpful discussions. We also thank anonymous referees for important suggestions for improvement of the manuscript. This work is supported by the Singapore Ministry of Education Academic Research Fund Tier 3 MOE2012-T3-1-009 and also the Core Grants of the Center for Quantum Technologies (CQT), Singapore. Work of A.S. was done while visiting CQT, Singapore. 

\bibliographystyle{alpha}
\bibliography{references}

\end{document}